\renewcommand\footnotetextcopyrightpermission[1]{}
\definecolor{Bittersweet}{RGB}{169,120,81}
\definecolor{OliveGreen}{RGB}{44,94,26}
\definecolor{RoyalPurple}{RGB}{108,73,152}
\newdimen\@tempdimd
\newif\ifextended
\newif\ifminted
\newif\ifintrouble
\newif\ifcuttext
\newcommand{\textred}[1]{\begingroup \color{red} #1\endgroup}
   \newcommand{\pgwrapper}[2]{\textred{#1: #2}}
   \newcommand{\pgwrapperb}[1]{\textbf{#1}}
   \newcommand{\pgwrapperb}[1]{}
   \newcommand{\pgwrapper}[2]{}
\newcommand{\yz}[1]{\pgwrapper{\color{red} YZ}{\color{red} #1}}
\newcommand{\haobin}[1]{\pgwrapper{\color{blue} HN}{\color{blue} #1}}
\newcommand{\sys}{Bercow\xspace}
\renewcommand*{\@fnsymbol}[1]{\ensuremath{\ifcase#1\or *\or \star\or \dagger\or \ddagger\or
   \mathsection\or \mathparagraph\or \|\or **\or \dagger\dagger
   \or \ddagger\ddagger \else\@ctrerr\fi}}
\def\hn{\usefont{OT1}{phv}{mc}{n}\selectfont}
\newcommand{\mpfont}{\hn\scriptsize}
\newcommand{\MPworker}[2]{{\color{#1}\vrule\vrule}{\marginpar{\color{#1}\mpfont #2}}}
    \newcommand{\SP}[1]{\MPworker{red}{STVS: #1}}
    \newcommand{\JLP}[1]{\MPworker{blue}{JL: #1}}
    \newcommand{\MPtg}[1]{\MPworker{green}{TG: #1}}
    \newcommand{\SAP}[1]{\MPworker{magenta}{SGA: #1}}
    \newcommand{\SP}[1]{}
    \newcommand{\JLP}[1]{}
    \newcommand{\MPtg}[1]{}
    \newcommand{\SAP}[1]{}
\newcommand\rmv[1]{}
\newcommand{\techReportOnly}[1]{}
\newcommand{\Po}{\mathcal{P}}
\newcommand{\Io}{\mathcal{I}}
\def\compactify{\itemsep0in \topsep2pt \parsep=0.00in \partopsep=0pt
\leftmargin2em}
\let\latexusecounter=\usecounter
\theoremstyle{definition}
\theoremstyle{definition}
\newtheorem{definition}{Definition}[section]
\theoremstyle{plain}
\newtheorem{lemma}{Lemma}[section]
\newtheorem{theorem}{Theorem}[section]
\newtheorem{assumption}{Assumption}[section]
\def\discretionaryslash{\discretionary{/}{}{/}}
{\catcode`\/\active
\gdef\URLprepare{\catcode`\/\active\let/\discretionaryslash
        \def~{\char`\~}}}%
\def\URL{\bgroup\URLprepare\realURL}%
\def\realURL#1{\tt #1\egroup}%
\renewcommand\paragraph[1]{\vspace{3px} \noindent \textbf{#1}}
\begin{document} 

\title{\textbf{Ordered Consensus with Equal Opportunity}}


\author{Yunhao Zhang,$^\dagger$ Haobin Ni,$^\star$ Soumya Basu,$^\ddagger$ Shir Cohen,$^\dagger$ Maofan Yin$^{*}$, \\
Lorenzo Alvisi,$^\dagger$ Robbert van Renesse,$^\dagger$ Qi Chen,$^\mathsection$ and Lidong Zhou$^\mathsection$ \\
\vspace{1ex}\emph{$^\dagger$Cornell University\quad\quad{} $^\star$University of Washington\quad\quad{} $^\mathsection$Microsoft Research\quad\quad{} $^\ddagger$Vigil Markets\quad\quad{} $^*$University of California, Santa Barbara} \vspace{2ex}}
\begin{abstract}
The specification of state machine replication (SMR) has no
requirement on the final total order of commands.  In blockchains
based on SMR, however, order matters, since different orders could
provide their clients with different financial rewards.  Ordered
consensus augments the specification of SMR to include specific
guarantees on such order, with a focus on limiting the influence of
Byzantine nodes. Real-world ordering manipulations, however, can and
do happen even without Byzantine replicas, typically because of
factors, such as faster networks or closer proximity to the blockchain
infrastructure, that give some clients an unfair advantage. To address
this challenge, this paper proceeds to extend ordered consensus by
requiring it to also support \textit{equal opportunity}, a concrete
notion of fairness, widely adopted in social sciences. Informally,
equal opportunity requires that two candidates who, according to a set
of criteria deemed to be {\em relevant}, are equally qualified for a
position (in our case, a specific slot in the SMR total order), should
have an equal chance of landing it.  We show how randomness can be
leveraged to keep bias in check, and, to this end, introduce the
secret random oracle (SRO), a system component that generates
randomness in a fault-tolerant manner. We describe two SRO designs
based, respectively, on trusted hardware and threshold verifiable
random functions, and instantiate them in \sys, a new ordered
consensus protocol that, by approximating equal opportunity up to
within a configurable factor, can effectively mitigate the
well-known ordering attacks in SMR-based blockchains.
\end{abstract}

\date{}
\settopmatter{printfolios=true}
\maketitle
\pagestyle{plain}

\section{Introduction}
\label{s:intro}

This paper extends {\em ordered consensus}~\cite{zhang2020byzantine} by
motivating, expressing, and enforcing  \textit{equal
  opportunity}, a concrete notion of fairness that applies to how a
state machine replication (SMR)~\cite{schneider90implementing}
protocol \textit{orders} client requests.

SMR is the most general technique for building fault-tolerant
services. At its core, SMR requires a group of replicas to agree on
the same, totally ordered sequence of client requests ({\em i.e.,} a
ledger).  Therefore, it is unsurprising that SMR has become a standard
paradigm for permissioned blockchains.
Applying SMR to this new context,
however, poses fresh challenges.

As long as all requests from correct clients eventually appear in the
ledger, their specific order is immaterial when SMR is used for
fault-tolerance: all that matters is for all correct replicas
to process client requests in the same order.  In blockchains,
however, the specific order matters, as it can determine the financial rewards
associated with the transactions recorded in the ledger.

Ordered consensus aims to make
order a first-class citizen of the SMR specification.
Specifically, each replica is required to associate with each command
an {\em ordering indicator} ({\em e.g.,} a timestamp), which the replica can use to express how it would like to order
commands with respect to one another.
Leveraging ordering indicators, it is possible to
prove~\cite{zhang2020byzantine} that, while Byzantine influence over
the ledger's order cannot be completely eliminated, it can be
curtailed. In particular, it is possible to guarantee the ordering of
commands stored in the ledger satisfies {\em ordering linearizability}:
if  the lowest timestamp that any
correct replica assigns to command $c_2$ is larger than the highest
timestamp that any correct node assigns to command $c_1$, then $c_1$
will precede $c_2$ in the ledger---independent of the actions of
Byzantine replicas.

The starting point of this paper is the observation that, while
limiting the influence of Byzantine nodes is a necessary first step
towards providing fairness, unfairness can and does arise in practice
even when all replicas are correct.

Consider, for example, the practice known in financial markets as {\em
  front-running}, where a party, aware of the existence of a large buy
order for some stock, places beforehand its own buy order for the same
stock. This party is then able to buy low and later sell high, once
the stock's value has been driven up by the ensuing large buy
order. It has been widely reported how a faster network can enable
front-running not just in traditional financial
markets~\cite{lewis14flashboys}, but also in decentralized
ones~\cite{daian20flash,torres2021frontrunner}.  No Byzantine replica
is necessary for these attacks to succeed in a blockchain based on
SMR: when using timestamps as ordering indicators, the difference in
network latency (either due to physical proximity or access to faster
network facilities) between clients and replicas may provide some
clients with a systemic advantage over others.

This paper's first contribution is to offer a framework to reason
about and address this kind of systemic bias. Taking inspiration from
social sciences, which have a long history of reasoning about bias and
unfairness, we observe that, whenever ranking is involved, the
position of an entry in the ranking depends on 
the entry's specific characteristics (or {\em
  features}).
Some of the features are  {\em relevant} to the
stated purpose of the ranking, while others may be {\em
  irrelevant}. For example, US employers and lending agencies are
legally forbidden to consider certain irrelevant features ({\em
  protected classes}) in making
decisions~\cite{civilrightact,originalempolymentact}.  Intuitively, a
fair ranking is one that relies only on the entries' relevant features
and ignores all the other features. Entries with indistinguishable
relevant features should then have an equal chance of being ranked
ahead of each other.

Building on the expressiveness offered by ordered consensus, this
paper instantiates this general notion of fairness in the concrete
context of SMR-based blockchains. Since we
aim for the ledger to reflect an order respecting the time clients
issued requests, we deem the {\em time of issue} as the relevant
feature in determining the ledger's order.  Other features, such as
geographic location, are considered irrelevant.

Unfortunately,
existing protocols for ordered consensus neither distinguish between
relevant and irrelevant features, nor reason about such
distinctions. As a result, protocols like
Pomp\={e}~\cite{zhang2020byzantine}, Aequitas~\cite{kelkar20order},
and Themis~\cite{kelkar2021themis} are all vulnerable to
parties that leverage irrelevant features, such as faster network
facilities, to engage in front-running and in its close relative, {\em
  sandwich attacks} (see Section~\ref{s:violate-ordering-equality}).


Although preventing Byzantine replicas from conducting such attacks is
provably impossible~\cite{zhang2020byzantine}, we show that distinguishing
between relevant and irrelevant features when assigning ordering
preference can mitigate the problem.

Concretely, our second contribution is to specify {\em
  $\epsilon$-Ordering equality}, a new ordering property based on this
distinction, and to propose a protocol that enforces it. Intuitively,
$\epsilon$-Ordering equality requires the likelihood of all
permutations of client requests with indistinguishable relevant
features to differ by at most $\epsilon$. To enforce it, we use a
Secret Random Oracle (SRO), an abstraction that offers a
fault-tolerant and unbiased source of randomness, to add some noise to
the final order computed from the replicas' ordering indicators. We
provide two SRO designs---one design uses the Trusted Execution Environment
(TEE), while the other design relies purely on cryptography, using threshold
verifiable random functions~\cite{christian2000Random, galindo2021fully}.

Given that the profitability of front-running and sandwich attacks depends
on the ledger recording a specific permutation of
transactions, adding randomness, by altering the probability of
adopting {\em that} specific permutation, can reduce the
effectiveness of these attacks significantly.

While adding randomness mitigates the attacks, it can also compromise
the role of a client request's {\em relevant} feature in determining
its position in the ledger.  Our third contribution is to quantify
this tension as a trade-off between $\epsilon$-Ordering equality and
another ordering property, {\em $\Delta$-Ordering linearizability}.
Like ordering linearizability~\cite{zhang2020byzantine},
$\Delta$-Ordering linearizability is an ordering guarantee robust to
Byzantine tampering. However, while ordering linearizability depends
on timestamps reflecting when client commands are received, and thus
conflates both relevant and irrelevant features, $\Delta$-Ordering
linearizability applies to timestamps that reflect the real time when
clients' commands are issued: it ensures that the ledger will respect
the invocation order of two client command issued at least $\Delta$
time apart.

Ideally, we would like both the $\epsilon$ and the $\Delta$ in the
respective fairness guarantees to be as small as possible.
Unfortunately, however, there is a trade-off between them: adding more
random noise can decrease $\epsilon$, but at the price of potentially
expanding the $\Delta$ interval necessary to ensure
that, independent of Byzantine ploys, two commands $\Delta$ or more apart will be correctly ordered in the ledger.

Finally, to explore the practical implications of this trade-off, we
design, implement, and evaluate \sys, a new ordered consensus protocol
designed to operate in the {\em partially synchronous}
model~\cite{dwork88consensus} introduced to sidestep the impossibility of
 safe, live, and fault-tolerant consensus in asynchronous
systems~\cite{fischer83impossibility}. \sys is always safe, and, during periods where
progress is possible (formally, after some unknown Global
Stabilization Time; practically, during long-enough synchronous
intervals), it is also live, and enforces both $\epsilon$-Ordering equality and
$\Delta$-Ordering linearizability.  Specifically, \sys modifies
Pomp\={e} by adding SRO-generated random noise to the fault-tolerant
timestamp that Pomp\={e} uses to order commands.


Our evaluation demonstrates that \sys can be effective in mitigating
front-running and sandwich attacks while incurring moderate
performance overhead.  For example, when adding a random noise sampled
from $[0, \Delta_{net} * 5]$, where $\Delta_{net}$ is a bound on the
message delay experienced during synchronous intervals, \sys can keep
$\epsilon$ under $10\%$ -- a threshold considered acceptable in other
contexts where equal opprtunity is to be
enforced~\cite{originalempolymentact,empolymentact} --- while matching
Pomp\={e}'s throughput; the added random noise increases median consensus
latency by about $14\%$ in a setup of 49 nodes.

\section{Equal opportunity}
\label{s:properties}

Consider a system in which clients invoke commands. The system
aims to produce, as its output, a total order that reflects the real
time at which commands are invoked--earlier commands should precede
later ones.


\subsection{Motivating equal opportunity}
\label{s:violate-ordering-equality}

Informally, if two commands have the same invocation time, equal
opportunity says that the two possible orders should be equally likely
to appear in the system output. Similarly, if three invocations all
have the same invocation time, the six possible orders should be
equally likely.
To show how equal opportunity is often violated in the real world, we
analyze publicly available traces of Ethereum. While Ethereum is a
permissionless blockchain, none of the issues we identify depend on Ethereum being permissionless.

\paragraph{Case \#1: Two invocations.}
Violating equal opportunity for two invocations may not only indicate bias but provide opportunities for
front-running~\cite{daian20flash, torres2021frontrunner}. Empirical
studies show that both phenomena have been significant factors in the allocation of
\$89M over 32 months in the Ethereum
blockchain~\cite{qin2022quantifying}.

For example, an invocation from Europe is likely to be ordered before a
simultaneous one from Australia because more Ethereum nodes are
located in Europe. If the system orders the invocation from Europe
earlier in its output more than half the time, we say the system is
biased toward Europe. While the geographical location is typically an
irrelevant feature in the context of equal opportunity, such bias has
been observed and reported in other blockchains as
well~\cite{liu23flexible}.

Geographical bias can lead to undesirable consequences on blockchain
liquidations. In real life, liquidations occur when an individual
goes bankrupt. For example, if someone cannot pay their debts but
owns a house, a court can sell the house to repay the debts. If the
market price of the house is \$1.2M, the court may sell it for only
\$1M. Therefore, many parties would compete for pocketing a \$200K
profit. Similar liquidations happen on blockchains, where they
provide a common way of making a profit in the stable coin~\cite{makerdao}
and lending\cite{aave} applications. The buyer whose command is ordered 
first on the blockchain is typically the one that realizes the profit.

Consider two clients from Europe and Australia. Suppose they invoke
the liquidation command simultaneously, and the system is biased
toward Europe, ordering the European command first with a 75\% chance.
In that case, the expected value of the European client will be
\$0.2M * 0.75 = \$150K, and that will be \$0.2M * 0.25 =
\$50K for the Australian client. In other words, geographical
bias could cause very different profits to clients who should be
treated equally.

Similarly, a client intent on becoming the beneficiary of a 
liquidation's profits could leverage faster network connections to
violate equal opportunity and front-run other clients.

\begin{figure}[]
  \fontsize{11pt}{13pt}\selectfont
\begin{tabular}{|l|l|l|}
\hline
\begin{tabular}[c]{@{}l@{}}System output\end{tabular} & \begin{tabular}[c]{@{}l@{}}Victim's profit\end{tabular} & \begin{tabular}[c]{@{}l@{}}Attacker's profit\end{tabular}
\\ \hline
 $i_2, \textcolor{Bittersweet}{i_1}, i_3$
&  -\$500
&  \$800 
\\ \hline
$i_3, \textcolor{Bittersweet}{i_1}, i_2$
&  \$700
&  -\$400
\\ \hline
other order
&  \$300
&  \$0 \\ \hline
\end{tabular}
\caption{An example of sandwich attacks where the victim invokes $i_1$ and the attacker invokes $i_2$ and $i_3$.
The semantics of the three invocations are explained in Section~\ref{s:eval:sandwich}.}
\label{fig:sandwich}
\vspace{-5px}
\end{figure}

\paragraph{Case \#2: Three invocations.}
In this case, violating equal opportunity among three simultaneous commands could enable sandwich
attacks~\cite{zhou2021high}. Empirical studies show that victims of
sandwich attacks have lost more than \$174M over 32 months in the
Ethereum blockchain~\cite{qin2022quantifying}.

Figure~\ref{fig:sandwich} shows an example of sandwich attacks
observed in the decentralized exchange applications. Right after the
victim invokes command $i_1$, the attacker invokes commands $i_2$ and
$i_3$. The attacker only profits if the order the system outputs is
$i_2, i_1, i_3$. Therefore, the key to sandwich attacks is making $i_2,
i_1, i_3$ a much more likely output than equal opportunity would
allow. A common strategy to influence the odds is to privately relay
$i_2$ and $i_3$ to colluding nodes, which will then exclusively
propose blocks containing the sequence $i_2, i_1,
i_3$~\cite{qin2022quantifying}.

Of course, the attacker is always free to decide which specific
commands to invoke as their trading strategy, and different trading
strategies may still lead to different expected profits -- but,
crucially, the system should not allow attackers to tamper with the
odds of its different possible outputs: all six permutations of the
three commands should be roughly equally likely.

\subsection{Impartiality and Consistency}
\label{s:two-principles}

The notion of equal opportunity derives from the combination of two
well-known principles in economics~\cite{young1995equity} -- {\em
 impartiality} and {\em consistency}. When applied to our settings,
impartiality informally requires that the order of commands should not
be influenced by irrelevant features, such as clients' geolocation.
Consistency instead requires that the invocation of a new command
should not cause the relative order of existing commands to change.

We now introduce these notions more formally.

\paragraph{Client invocation.}
An \textit{invocation} is a pair $\langle{}c,\vec{f_r}\rangle{}$ where
$c$ is a command and $\vec{f_r}$ is a vector of relevant features,
{\em i.e.,} of the only features that should be considered in
determining how a client's commands should be ordered. Typically,
features to be ignored include a client's identifier, 
geolocation, wealth, and network facilities. In blockchains,
relevant features typically include invocation time and
transaction fee. An \textit{invocation profile}, denoted as $\Io$, is
a set of invocations.

\paragraph{Node preference.}
Nodes observe invocations and express preferences. The
\textit{preference} of a node is a set of $\langle{}i,o\rangle{}$
pairs, where $i$ is an invocation and $o$ is an ordering indicator
({\em i.e.,} a piece of metadata such as a score or a timestamp). The
preference of a node represents how a node would like to order the
invocations. A \textit{preference profile}, denoted as $\Po$, is a
vector of preferences from all correct nodes.

\paragraph{World and chance relation.}
A \textit{world} is a pair of $\langle{}\Io, \Po\rangle{}$,
representing the scenario where clients invoke commands $\Io$ and correct nodes express preferences $\Po$.
For all worlds, $\Po$ is well-formed under $\Io$, meaning all the invocations in $\Po$ should also appear in $\Io$.

It is uncertain which world will actually happen. When one admits
that nothing is certain, one must also add that some things are more
nearly certain than others~\cite{russell1950theist, halpern2017reasoning}. 
We thus introduce \textit{chance relations}:
for any two worlds, $w_1$ and $w_2$, $w_1 \succ_c w_2$ denotes that
$w_1$ has a higher chance than $w_2$, and $w_1 \sim_c w_2$ denotes
that the two worlds have an equal chance of happening in the system.

\paragraph{Impartiality.}
A system is \textit{impartial} if and only if, 
for all world $\langle{}\Io, \Po_1\rangle{}$,
for all $\Io' \subseteq \Io$,
for all $\Po_2$ permutating $\Io'$ in $\Po_1$,
if all invocations in $\Io'$ have the same $\vec{f}_r$, then $\langle \Io, \Po_1\rangle \sim_c \langle \Io, \Po_2\rangle$.

\begin{figure}[]
 \centering
 \fontsize{11pt}{13pt}\selectfont
\begin{align*}
\Io  &= \{ \langle{}c_1, \langle{}\text{5pm}\rangle{}\rangle{}_{i_1}, \langle{}c_2, \langle{}\text{5pm}\rangle{}\rangle{}_{i_2}, \langle{}c_3, \langle{}\text{5:01pm}\rangle{}\rangle{}_{i_3} \} \\
\Po_1 &= \langle{}\{ \textcolor{Bittersweet}{\langle{}i_1, 1\rangle{}}, \textcolor{Bittersweet}{\langle{}i_2, 2\rangle{}}, \langle{}i_3, 3\rangle{}\}\rangle{} \\
\Po_2 &= \langle{}\{ \textcolor{Bittersweet}{\langle{}i_2, 1\rangle{}}, \textcolor{Bittersweet}{\langle{}i_1, 2\rangle{}}, \langle{}i_3, 3\rangle{}\}\rangle{}
\end{align*}
\vspace{-15px}
\caption{An example of invocation and preference profiles for a system with one node ($n=1, f=0$). The system uses invocation time as the only relevant feature and sequence numbers as ordering indicators.
$\Io' = \{i_1, i_2\}$ for impartiality because they have the same relevant feature $5pm$.
If there is more than one correct node, $\Po_2$ could permutate $\Io'$ for one or more entries in vector $\Po_1$.}
\vspace{-2px}
\label{fig:property-example1}
\end{figure}

Impartiality is the first pillar of equal opportunity, and Figure~\ref{fig:property-example1} shows an example.
Since 5pm is the relevant feature of both $i_1$ and $i_2$, and $\Po_2$ swaps $i_1$ and $i_2$ from $\Po_1$,
impartiality says that the two worlds, $\langle \Io, \Po_1\rangle$ and $\langle \Io, \Po_2\rangle$ should be equally likely to happen in the system.
In other words, the order of $i_1$ and $i_2$ should be based on invocation time and independent of irrelevant features.

Consistency is the second pillar of equal
opportunity. Figure~\ref{fig:property-example2} shows an example.
$\langle \Io, \Po_1\rangle$ and $\langle \Io, \Po_2\rangle$ are two
worlds where $c_3$ is invoked but never received by the correct nodes
so that $i_3$ is missing from the preferences. Consistency says that
$\langle \Io, \Po_1\rangle \succ_c \langle \Io, \Po_2\rangle \iff
\langle \Io, \Po_3\rangle \succ_c \langle \Io, \Po_4\rangle$. In
other words, the order of $i_1$ and $i_2$ should be based on their
\textit{own} features and be independent of the features of $i_3$,
even the invocation time of $i_3$. Let $\Po_{1} =_{\Io} \Po_{2}$
denote $\Po_{1}$ equals to $\Po_{2}$ over invocations in $\Io$ and
$\Io = \{\Io_1, \Io_2\}$ denote a partition of $\Io$. We now define
consistency formally.

\paragraph{Consistency.}
A system is \textit{consistent} if and only if,
for all worlds $\langle \Io, \Po_1 \rangle$,
$\langle \Io, \Po_2 \rangle$,
$\langle \Io, \Po_3 \rangle$,
$\langle \Io, \Po_4 \rangle$,
for all partition $\Io = \{\Io_1, \Io_2\}$,
$\textcolor{Bittersweet}{\Po_1 =_{\Io_1} \Po_3} \;\wedge\; \textcolor{RoyalPurple}{\Po_2 =_{\Io_1} \Po_4} \;\wedge\; \Po_1 =_{\Io_2} \Po_2 \;\wedge\; \textcolor{OliveGreen}{\Po_3 =_{\Io_2} \Po_4}$
implies $\langle \Io, \Po_1\rangle \succ_c \langle \Io, \Po_2\rangle \iff \langle \Io, \Po_3\rangle \succ_c \langle \Io, \Po_4\rangle$.

A straightforward approach to achieving both impartiality and
consistency is to establish a \emph{point system}. In a point system,
the system designer decides a formula mapping each invocation to a
\textit{score}, and this score only depends on the relevant features
of the invocation. The output of a point system orders all the
invocations by their scores and breaks ties by uniformly sampling a
permutation. Impartiality is guaranteed because the score assigned to
each invocation only depends on its relevant features, so the same
relevant features lead to the same scores. Consistency is guaranteed
because the score of each invocation depends on its own features and
is thus independent of the other invocations.  Indeed, it has been
proved that, when it comes to ranking, a point system is the only
mechanism that can satisfy both impartiality and
consistency~\cite{young1995equity}.

\subsection{Properties for distributed systems}
\label{s:two-properties}

In distributed systems, invocation time is typically the only relevant feature:
commands are ordered according to \textit{when} they are invoked.
Equal opportunity can be achieved with a point system directly using the invocation time as the score.
Unfortunately, in practice, the invocation time of commands cannot be measured accurately.
The invocation time can only be approximately measured by the time nodes observe the invocation.
Therefore, such a measurement reflects the invocation time and also irrelevant features such as geolocation.
To accommodate such measurement inaccuracy in distributed systems, we relax a point system with two parameters, $\epsilon$ and $\Delta$,
and define two properties, \textit{$\epsilon$-Ordering equality} and \textit{$\Delta$-Ordering linearizability}.

\paragraph{$\epsilon$-Ordering equality.}
For all invocation profile $\Io$ and subset $\Io' \subseteq \Io$,
for all two total orders of $\Io'$ denoted as $\succ_{1}$ and $\succ_{2}$, if all invocations in $\Io'$ have the same invocation time as their relevant feature, then $|Pr[\succ_{1}]-Pr[\succ_{2}]| \leq \epsilon(|\Io'|)$.

In this definition, $\epsilon$ is a function with the cardinality of $\Io'$ as its parameter, and $Pr[\succ]$ denotes the probability of $\succ$ appearing in the system output under the condition that $\Io$ is invoked.
Different systems could enforce this property with different $\epsilon$ functions.
Intuitively, by making $\epsilon$ approach $0$,
clients with the same relevant features would have similar chances,
reflecting the tie-breaking part of a point system.
We now define another property that reflects 
how a point system deals with different scores.

\begin{figure}[]
  \centering
  \fontsize{11pt}{13pt}\selectfont
\begin{align*}
\Io &= \{ \langle{}c_1, \text{1pm}\rangle_{i_1}, \langle{}c_2, \text{1:01pm}\rangle_{i_2}, \langle{}c_3, \text{1:05pm}\rangle_{i_3} \} \\
\Po_1 &= \langle{}\{ \textcolor{Bittersweet}{\langle{}i_1, 1\rangle, \langle{}i_2, 2\rangle} \}\rangle
\;\; \Po_3 = \langle{}\{ \textcolor{Bittersweet}{\langle{}i_1, 1\rangle, \langle{}i_2, 2\rangle}, \textcolor{OliveGreen}{\langle{}i_3, 3\rangle} \}\rangle\\
\Po_2 &= \langle{}\{ \textcolor{RoyalPurple}{\langle{}i_1, 2\rangle, \langle{}i_2, 1\rangle} \}\rangle
\;\; \Po_4 = \langle{}\{ \textcolor{RoyalPurple}{\langle{}i_1, 2\rangle, \langle{}i_2, 1\rangle}, \textcolor{OliveGreen}{\langle{}i_3, 3\rangle} \}\rangle
\end{align*}
\vspace{-15px}
\caption{An example of consistency where $\Io_1 = \{ i_1, i_2 \}$ and $\Io_2 = \{ i_3 \}$.
This example corresponds to real-world scandals:
some stock exchanges introduced special commands like $c_3$ to illegally help certain trading firms profit by manipulating the order of $c_1$ and $c_2$~\cite{lewis14flashboys}.
}
\vspace{-2px}
\label{fig:property-example2}
\end{figure}

\paragraph{$\Delta$-Ordering linearizability.}
For all invocation profile $\Io$ and two invocations $i_1, i_2 \in \Io$,
if $i_1$ is invoked more than $\Delta$ time before $i_2$,
then $i_1$ will appear before $i_2$ in the output.

Ideally, $\Delta$ can be 0 so that commands invoked earlier will always appear earlier in the output.
Consider a naive setup in which all the nodes are correct, and they measure the invocation time accurately.
The ideal could then be achieved, as stated by the following theorem.

\begin{theorem}
  \label{ordering-equality-simple}
  If all nodes are correct and accurately measure the invocation time of all invocations, ordering equality and ordering linearizability with $\epsilon=\Delta=0$ can be achieved using a point system.
 \end{theorem}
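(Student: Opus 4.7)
The plan is to construct an explicit point system and check both properties by direct calculation. Since all nodes are correct and measure invocation time accurately, every node's preference profile $\Po$ will agree on the invocation time of each $i \in \Io$. I would define the score of invocation $i$ to be exactly its invocation time $t(i)$, which, being a function of the only relevant feature, makes this a legal point system in the sense of Section~\ref{s:two-principles}. The output is obtained by sorting $\Io$ by $t(\cdot)$ and breaking ties by sampling a permutation uniformly at random among the tied invocations.

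For $\Delta$-Ordering linearizability with $\Delta = 0$, I would argue as follows. Take any $i_1, i_2 \in \Io$ with $t(i_1) < t(i_2)$. Because every correct node measures invocation time accurately, the assigned scores coincide with $t(i_1)$ and $t(i_2)$ respectively, so $i_1$ receives a strictly smaller score than $i_2$. The sort step therefore places $i_1$ before $i_2$ deterministically, regardless of how ties elsewhere are broken; randomization only permutes invocations within a common-score block. This is precisely the $\Delta = 0$ case.

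For $\epsilon$-Ordering equality with $\epsilon = 0$, consider any $\Io' \subseteq \Io$ whose invocations share a common invocation time $t^\star$, and any two total orders $\succ_1, \succ_2$ on $\Io'$. All invocations in $\Io'$ receive the identical score $t^\star$, hence form (a subset of) one tie-block $B \supseteq \Io'$ in the sort. Within $B$ the point system draws a uniformly random permutation; marginalizing over the positions of $B \setminus \Io'$, each of the $|\Io'|!$ orderings of $\Io'$ is equally likely, so $\Pr[\succ_1] = \Pr[\succ_2] = 1/|\Io'|!$ and the required bound holds with $\epsilon(|\Io'|) = 0$.

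The only subtlety I anticipate is making sure that the random tie-breaking can be carried out consistently across nodes given only the stated hypothesis (all correct, accurate timestamps); since every node sees the same $\Io$ with the same scores, any shared source of randomness (or, more simply, the fact that in the ideal model we can speak of a single output distribution) suffices, and no Byzantine-tolerance machinery is needed. The proof does \emph{not} require Young's uniqueness result~\cite{young1995equity}; that result is only invoked to motivate restricting attention to point systems, whereas here we just need existence, which the construction above provides.
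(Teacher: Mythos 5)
Your proof is correct and follows exactly the route the paper intends: the paper's own proof is the one-line ``By properties of a point system,'' deferring to the point-system construction of Section~\ref{s:two-principles} (score = invocation time, sort, uniform tie-breaking), which is precisely what you construct and verify in detail.
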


\noindent{\em Proof.} By properties of a point system.
\pushQED{\qed}
\qedhere \popQED

\section{Secret Random Oracle}
\label{s:design}

The point system mechanism suggests that randomness
can be crucial to ordering equality.
We thus specify a system component,
\textit{secret random oracle} (SRO),
that generates random numbers and keeps the random numbers
secret until other components have finished producing their outputs.
By keeping those numbers secret, the outputs of other system components cannot depend on them.

\paragraph{Design overview.}
Consider two clients who both invoke a command at time $t$.
Say the first client uses faster network facilities for front-running,
and the two commands are received by all correct nodes before $t + 200ms$ and $t + 400ms$, respectively.
If two timestamps are chosen from $[t,~t + 200ms]$ and $[t,~t + 400ms]$ for ordering, the system could sample two random numbers independently and add to the timestamps, so the probability of the two commands being ordered one way or the other is close.
This \textit{random noise} affects the $\epsilon$ of ordering equality and the $\Delta$ of ordering linearizability.
As the intensity of the random noise approaches infinity, $\epsilon$ will approach $0$ because ordering is then dominated by randomness,
but $\Delta$ will unfortunately approach infinity.
We will study such a trade-off between $\epsilon$ and $\Delta$ quantitatively.

\paragraph{SRO interface and guarantees.}
Consider a set of $n$ nodes, at most $f$ of which can behave arbitrarily.
Every node holds a private key and knows the public keys of all other nodes.
They provide a secret random oracle service with an interface shown in Figure~\ref{fig:arch}.

A random function (not shown) maps an integer $k$ to a
pseudorandom number.
\texttt{Reveal} is invoked to reveal the random number after all
nodes in a quorum demonstrate, by providing signatures of $k$,
that they wish to reveal the random number.
\texttt{Generate} takes integer \texttt{k} and returns a
cryptographic proof.
Given a proof, \texttt{Verify} verifies whether the random number
returned by \texttt{Reveal} is correct.
An SRO provides the following guarantees:

\begin{figure}[]
\fontsize{10pt}{12pt}\selectfont
\begin{tabular*}{1\columnwidth}{@{\extracolsep{\fill}}l r}
\toprule

\texttt{\textcolor{RoyalPurple}{Reveal}(Int \textcolor{Bittersweet}{k}, Set<Signature> \textcolor{Bittersweet}{s})} $\rightarrow$ \texttt{Int | Error}\\

\texttt{\textcolor{RoyalPurple}{Generate}(Int \textcolor{Bittersweet}{k})} $\rightarrow$ \texttt{Proof}\\

\texttt{\textcolor{RoyalPurple}{Verify}(Int \textcolor{Bittersweet}{k}, Proof \textcolor{Bittersweet}{p}, Int \textcolor{Bittersweet}{r})} $\rightarrow$ \texttt{Bool}\\

\bottomrule
\end{tabular*}

\caption{The interface of a secret random oracle (SRO).}
\label{fig:arch}
\vspace{-5px}
\end{figure}

\paragraph{Uniqueness.} \texttt{Reveal} maps every integer \texttt{k} to a random number. Multiple queries of \texttt{Reveal} with the same \texttt{k} and any valid set of signatures return the same random number.
A set of signatures is valid if it contains valid signatures of \texttt{k} from
at least $n-f$ different nodes.
\texttt{Reveal} returns an error when given an invalid set of signatures.

\paragraph{Secrecy.} For all integers \texttt{k},
if \texttt{r} is the unique integer that \texttt{Reveal} would return
and the adversary does not have valid signatures of \texttt{k} from $n-f$
different nodes, 
then it is computationally infeasible for the adversary
to distinguish \texttt{r} from a uniform random sample with non-negligible probability.

\paragraph{Randomness.} For all integers \texttt{k}, the unique
integer \texttt{r} that \texttt{Reveal} would return is a cryptographically
secure random number in that \texttt{r} is a non-error uniform random sample from the codomain of \texttt{Reveal}.

\paragraph{Validity.} For all integers \texttt{k},
if \texttt{r} is the unique integer returned by \texttt{Reveal} and
\texttt{p} is the proof returned by \texttt{Generate}, then
\texttt{Verify(k,p,r)} $\rightarrow$ \texttt{True}
and it is computationally
infeasible to find integer \texttt{r'}$\neq$\texttt{r}
making \texttt{Verify(k,p,r')} $\rightarrow$ \texttt{True}.

We show two SRO designs, one using trusted hardware and another
using cryptography.
We then integrate an SRO with Pomp\={e}--a state-of-the-art
ordered consensus protocol--and prove ordering equality and ordering linearizability.
We further demonstrate a quantitative trade-off between the two ordering properties,
which helps system designers decide how much random noise to add.

\subsection{An SRO design using trusted hardware}
\label{s:design-sro-tee}

Trusted Execution Environments (TEEs) provide a hardware enclave
that protects the integrity and confidentiality of user software.
Using TEEs to enforce secrecy in blockchains has been actively advocated~\cite{ic3teeblog, anotherteeblog}
and practically adopted by the Ethereum testnet~\cite{thirdteeblog}.
Here, we show how to implement an SRO based on TEEs.

\paragraph{Initialization.}
Consider that every node has a TEE running the SRO software.
During initialization, each TEE generates a random number with special CPU instructions ({\em e.g.,} using \texttt{RDRAND} in x86) and runs a distributed consensus protocol to agree on one such number.
This number is kept confidential within the TEEs and will be used as the \texttt{seed} of a pseudorandom function denoted as \texttt{RAND}.

\paragraph{Reveal, Generate and Verify.}
Given an integer and a set of signatures,
a node invokes \texttt{Reveal} by forwarding the arguments to its
local TEE. The TEE returns \texttt{RAND(seed, k)} or an error
depending on whether the second parameter contains enough valid signatures of \texttt{k}.
Similarly, \texttt{Generate} forwards \texttt{k} to the TEE, which returns \texttt{HASH(RAND(seed, k))} where \texttt{HASH} is a one-way function.
Lastly, \texttt{Verify} returns whether parameter \texttt{p} equals
\texttt{HASH(r)}.

\paragraph{Correctness.}
The same seed and the deterministic pseudorandom function ensures uniqueness and randomness.
Secrecy is ensured by the integrity and confidentiality of the TEEs,
which keep the seed and random numbers secret and only reveal the random numbers after seeing enough valid signatures.
The security properties of one-way functions ensure validity.

We tacitly assume that the initialization ({\em i.e.,} consensus on the random seed) eventually finishes.
This assumption ensures liveness: all invocations of the SRO functions
on a correct node eventually return a result.

\subsection{An SRO design using threshold VRF}
\label{s:design-sro-crypto}

\begin{figure}[]
\fontsize{9.9pt}{12pt}\selectfont
\begin{tabular*}{1\columnwidth}{@{\extracolsep{\fill}}l r}
\toprule

Threshold VRF node-side function\\

\texttt{\textcolor{RoyalPurple}{Produce}(Int \textcolor{Bittersweet}{k})} $\rightarrow$ \texttt{Share}\\

Threshold VRF client-side functions\\

\texttt{\textcolor{RoyalPurple}{Combine}(Set<Share> \textcolor{Bittersweet}{s})} $\rightarrow$ \texttt{Int | Error}\\

\texttt{\textcolor{RoyalPurple}{Valid}(Int \textcolor{Bittersweet}{k}, Int \textcolor{Bittersweet}{node\_id}, Share \textcolor{Bittersweet}{s})} $\rightarrow$ \texttt{Bool}\\

The modified node-side function\\

\texttt{\textcolor{RoyalPurple}{Produce}(Int \textcolor{Bittersweet}{k}, Set<Signature> \textcolor{Bittersweet}{s})} $\rightarrow$ \texttt{Share | Error}\\

\bottomrule
\end{tabular*}

\caption{The interface of (modified) threshold VRF.}
\label{fig:tvrf}
\vspace{-5px}
\end{figure}

A Threshold Verifiable Random Function (or threshold VRF) is a cryptographic construction used by several Byzantine agreement protocols~\cite{christian2000Random, galindo2021fully}.
TVRFs have been used to select a random set of nodes as a committee
and to ensure safety and liveness under a fully asynchronous network model.
We show how to use a threshold VRF to construct an SRO.

Let \texttt{TVRF} denote a function from an integer to a pseudorandom number, similar to \texttt{RAND} in the first design.
Figure~\ref{fig:tvrf} shows the interface of threshold VRF for evaluating \texttt{TVRF(k)}.
Each node invokes \texttt{Produce(k)}, which produces a \textit{share} using its private key.
After collecting a sufficient number of shares, a client invokes \texttt{Combine}, which returns \texttt{TVRF(k)}.
However, shares from Byzantine nodes could be invalid.
To this end, \texttt{Valid} checks, using the corresponding public key, whether a share from a node is valid or not.
Threshold VRFs provide two important properties, informally~\cite{christian2000Random}:

\paragraph{Robustness.}
For all integers \texttt{k},
it is computationally infeasible for an adversary to produce enough
valid shares such that the integer output of \texttt{Combine} is not \texttt{TVRF(k)}.

\paragraph{Unpredictability.}
Without enough valid shares for \texttt{TVRF(k)}, an adversary 
cannot distinguish \texttt{TVRF(k)} from a uniform random sample with non-negligible probability.

We make a slight modification to threshold VRF.
For the \texttt{Produce} interface, we add a parameter, a set of signatures of \texttt{k} to be verified.
If verification fails, correct nodes must return an error instead of a share.
We can now design an SRO as follows:
\texttt{Reveal} forwards the two parameters to all nodes,
collects enough valid shares, and invokes \texttt{Combine}.
\texttt{Generate} returns the set of public keys.
\texttt{Verify}
takes all the public keys and a set of shares as input and returns whether all the shares are valid (using the \texttt{Valid} interface).

\paragraph{Correctness.}
Robustness implies uniqueness and validity: for all integers \texttt{k}, combining enough valid shares can only produce \texttt{TVRF(k)} since an adversary cannot create valid shares leading to a combined value different from \texttt{TVRF(k)}.
The properties of threshold VRF have been proven under the random oracle model,
which implies randomness--informally, hash values are cryptographically secure pseudorandom numbers.
Unpredictability implies secrecy because, without valid signatures of \texttt{k} out of $n-f$ nodes, an adversary cannot gain enough valid shares and--without enough valid shares--it has no information about \texttt{TVRF(k)}.
Liveness is ensured, assuming all network messages are eventually delivered.

\subsection{\sys: Integrating an SRO with Pomp\={e}}
\label{s:design-integration}

We now introduce \sys, which integrates an SRO with Pomp\={e}~\cite{zhang2020byzantine}, a state-of-the-art ordered consensus protocol.
The goal is to enforce the properties in Section~\ref{s:two-properties}.

\paragraph{The Pomp\={e} protocol.}
Pomp\={e} employs any standard leader-based BFT SMR protocol ({\em e.g.,} ~\cite{castro02practical}) that offers a primitive to agree on a value for each slot in a sequence of consensus decisions.
Pomp\={e} transforms such a protocol into a new one that enforces ordering properties.

Specifically, Pomp\={e} associates the slots with consecutive time intervals.
For example, one slot may be associated with time interval $[t,~t + 500ms)$, and the next slot could be associated with interval $[t + 500ms,~t + 1000ms)$.
For simplicity, Pomp\={e} assumes that such a mapping from slots to time intervals is common knowledge.
Within each consensus slot, the value to agree on is a set of $\langle c,~ats \rangle$ pairs where $c$ is a command and $ats$ is called an \textit{assigned timestamp}.
Pomp\={e} provides two guarantees for this assigned timestamp:
(1) $ats$ falls in the time interval associated with the slot;
(2) $ats$ is bounded by the lowest and highest timestamps
provided by correct nodes. The commands are then ordered by
their assigned timestamps.

Pomp\={e} requires $3f+1$ nodes and achieves these guarantees
by collecting $2f+1$ timestamps from different nodes for each command $c$.
The \textit{median} of the $2f+1$ timestamps is chosen as the assigned timestamp $ats$ for command $c$.
Since there are at most $f$ faulty nodes, the median of any $2f+1$ timestamps is upper-bounded and lower-bounded by timestamps provided by correct nodes.

\paragraph{Integrating Pomp\={e} with an SRO.}
\sys adds a random number to the assigned timestamp of each command.
Concretely, after consensus is reached for slot \texttt{k} in Pomp\={e}, a correct node obtains a set of signatures, which proves that consensus has been reached.
With \texttt{k} and this set of signatures, a correct node in \sys
invokes the \texttt{Reveal} SRO interface and obtains a random number used to seed in a random number generator.
Importantly, no node can determine what seed the correct nodes will use
until it has received sufficient signatures, and therefore, the seed is independent of the consensus decision.

The random number generator assigns a
pseudorandom number $r$ to each command, each uniformly sampled from
$[0,~\Delta_{noise})$.
Section \ref{s:design-quantify} describes how to select $\Delta_{noise}$.
Instead of directly ordering commands by their assigned timestamps,
commands are ordered by $ats + r$.

We call a command $c$ \textit{stable} (or finalized) if 
the output ledger produced contains $c$.
In Pomp\={e}, commands are ordered by the assigned timestamps, and commands in a slot become stable when this slot reaches consensus.
After adding random noise, it takes longer for a command to become stable in \sys.
Suppose the latest slot that reaches consensus is associated with time interval $[ts, ts')$,
a command $c$ becomes stable in \sys if $ats + r < ts'$, meaning that command $c$ and all commands before $c$ can be produced to the ledger.

\paragraph{Safety and liveness.}
Pomp\={e} guarantees the same safety and liveness properties as
the classic SMR protocols~\cite{castro02practical,lamport98parttime}.
\sys provides the same guarantees
and differs only by how commands are ordered.
We will now focus on proving the new ordering properties.

\subsection{Quantifying a trade-off between $\epsilon$-Ordering equality and $\Delta$-Ordering linearizability}
\label{s:design-quantify}

The remaining question is how to decide $\Delta_{noise}$.
We give a quantitative answer based on the upper bound $\Delta_{net}$ on message delivery latency, node processing time, and clock drift of correct nodes, defined by the partial synchrony model~\cite{dwork88consensus}.

\paragraph{Partial synchrony model.} 
One variant of the partial synchrony model introduces the \emph{Global Stabilization Time} (GST).
Specifically, there is an unknown time GST such that, after this time, there is a known bound $\Delta_{net}$ on network latency and processing time.
The safety and liveness of Pomp\={e} are proven under this model.
We now analyze the ordering properties of \sys in the same model.
More precisely:

\begin{assumption}
  \label{assumption-sync-net}
After the global stabilization time (GST),
if a command is invoked at time $T$, correct nodes will provide timestamps in the range $[T, T + \Delta_{net}]$ for this command.
\end{assumption}

Note that a simple clock synchronization protocol has
been given as part of the Pomp\={e} protocol.
We now prove $\Delta$-Ordering linearizability and $\epsilon$-Ordering equality with $\Delta = \Delta_{net} + \Delta_{noise}$
and $\epsilon = 1 - (1 - \Delta_{net} / \Delta_{noise})^2$ in the two invocation case.
This result implies a trade-off: as $\Delta_{noise}$ approaches infinity, $\epsilon$ will approach 0 while $\Delta$ will approach infinity.

\begin{lemma}
  \label{lemma-assigned-timestamp}
  The assigned timestamp of a command is bounded by timestamps provided by correct nodes.
\end{lemma}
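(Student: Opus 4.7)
The plan is to reduce the claim to a pigeonhole argument on the multiset of $2f+1$ timestamps that \sys (inheriting the mechanism from Pomp\={e}) collects for each command. Recall from the description of Pomp\={e} that the assigned timestamp $ats$ of a command $c$ is chosen as the \emph{median} of a set $S$ of $2f+1$ timestamps contributed by distinct nodes, of which at most $f$ are Byzantine. I would simply show that any such median is both $\geq$ some correct node's timestamp and $\leq$ some correct node's timestamp.

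First I would fix a command $c$ and let $S = \{t_1 \leq t_2 \leq \dots \leq t_{2f+1}\}$ be the sorted multiset of collected timestamps, so that $ats = t_{f+1}$. I would partition $S$ into the $f+1$ ``lower'' entries $L = \{t_1, \dots, t_{f+1}\}$ and the $f+1$ ``upper'' entries $U = \{t_{f+1}, \dots, t_{2f+1}\}$ (note $ats$ lies in both). Since at most $f$ of the $2f+1$ contributors are Byzantine, by pigeonhole each of $L$ and $U$ must contain at least one timestamp coming from a correct node; call these $t^{\text{c}}_{\text{lo}} \in L$ and $t^{\text{c}}_{\text{hi}} \in U$. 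By construction $t^{\text{c}}_{\text{lo}} \leq t_{f+1} = ats \leq t^{\text{c}}_{\text{hi}}$, which is exactly the claim.

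The only subtlety, and the step I would be most careful with, is making sure the collection process actually accepts at most $f$ Byzantine-authored timestamps into $S$. This follows from the Pomp\={e} quorum discipline: timestamps in $S$ are accompanied by authenticated contributions from $2f+1$ distinct nodes, and with at most $f$ Byzantine nodes in the system at least $f+1$ of those contributions are from correct nodes. Given that assumption, the pigeonhole step above is routine; no calculation is needed beyond observing $|L| = |U| = f+1 > f$. The lemma then holds independently of the random noise added by the SRO, since the noise modifies only the \emph{ordering key} $ats + r$ and not $ats$ itself, so this lemma will serve as the clean foundation on which the subsequent $\Delta$-Ordering linearizability and $\epsilon$-Ordering equality bounds can be built.
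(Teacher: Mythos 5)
Your proof is correct and matches the paper's intended argument: the paper itself simply defers to the Pomp\={e} paper for this lemma, but the reasoning it sketches in Section~\ref{s:design-integration} is exactly your pigeonhole argument that the median of $2f+1$ timestamps, at most $f$ of which are Byzantine, is sandwiched between two correct nodes' timestamps. Nothing further is needed.
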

\noindent{\em Proof.}  See~\cite{zhang2020byzantine}.

\begin{theorem}($\Delta$-Ordering linearizability)
  \label{theorem-ordering-linearizability}
After the global stabilization time (GST), for all invocations $i_1$ and $i_2$, if $i_1$ is invoked more than $\Delta_{net}+\Delta_{noise}$ earlier than $i_2$, then $i_1$ is guaranteed to appear before $i_2$ in the output.
\end{theorem}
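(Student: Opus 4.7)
My plan is to chain together Assumption~\ref{assumption-sync-net}, Lemma~\ref{lemma-assigned-timestamp}, and the bounded range of the SRO-generated noise to show that, when the invocation times are separated by more than $\Delta_{net}+\Delta_{noise}$, the ordering keys $ats + r$ used by \sys are strictly separated in the same direction. The argument reduces to producing a tight upper bound on $ats_1 + r_1$ and a tight lower bound on $ats_2 + r_2$ from what the model guarantees about honest timestamps and the support of the random noise.

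First, I would let $T_1$ and $T_2$ denote the invocation times of $i_1$ and $i_2$, with hypothesis $T_2 > T_1 + \Delta_{net} + \Delta_{noise}$. Both invocations occur after GST (the earlier one does, and the later one occurs strictly later), so Assumption~\ref{assumption-sync-net} applies to both: every correct node's timestamp for $i_1$ lies in $[T_1, T_1 + \Delta_{net}]$, and every correct node's timestamp for $i_2$ lies in $[T_2, T_2 + \Delta_{net}]$. Invoking Lemma~\ref{lemma-assigned-timestamp}, the assigned timestamps satisfy $ats_1 \in [T_1, T_1 + \Delta_{net}]$ and $ats_2 \in [T_2, T_2 + \Delta_{net}]$, since the median of a set of values is always bounded by the correct-node values (even when $f$ Byzantine entries are mixed in).

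Next, I would use the fact that the SRO-derived noises $r_1, r_2$ are drawn from $[0, \Delta_{noise})$. This gives the deterministic bounds
\begin{equation*}
ats_1 + r_1 \;<\; (T_1 + \Delta_{net}) + \Delta_{noise} \;<\; T_2 \;\leq\; ats_2 + r_2,
\end{equation*}
where the middle strict inequality is exactly the hypothesis of the theorem. Since \sys orders commands by $ats + r$, this strict separation forces $i_1$ to precede $i_2$ in the output. I would also briefly remark that the same inequality implies that whenever $i_2$ becomes stable in some slot with time interval $[ts, ts')$ (i.e.\ $ats_2 + r_2 < ts'$), the key $ats_1 + r_1$ also satisfies $ats_1 + r_1 < ts'$, so $i_1$ is already finalized and actually appears before $i_2$ in the ledger, not merely in some hypothetical ranking.

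There is no real obstacle here: the nontrivial content was already absorbed by Lemma~\ref{lemma-assigned-timestamp} (bounding the median) and Assumption~\ref{assumption-sync-net} (post-GST timestamp range). The only subtle point worth double-checking is that the noise range is a half-open interval $[0, \Delta_{noise})$, so the upper bound on $ats_1 + r_1$ is strict, which is what lets me conclude a strict inequality from the strict hypothesis $T_2 > T_1 + \Delta_{net} + \Delta_{noise}$ rather than merely a non-strict one; any tie-breaking concerns then vanish.
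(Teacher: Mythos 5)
Your proof is correct and follows essentially the same route as the paper's: apply Assumption~\ref{assumption-sync-net} and Lemma~\ref{lemma-assigned-timestamp} to place each assigned timestamp in $[T_j, T_j + \Delta_{net}]$, add the noise range to bound the modified timestamps, and observe that the hypothesis $T_2 > T_1 + \Delta_{net} + \Delta_{noise}$ strictly separates the two intervals. Your extra care about the half-open noise interval and the remark on stability are fine refinements but do not change the argument.
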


\noindent{\em Proof.}
Suppose $i_1$ and $i_2$ are invoked at time $T_1$ and $T_2$.
By Assumption~\ref{assumption-sync-net} and Lemma~\ref{theorem-ordering-linearizability},
the assigned timestamp of $i_1$ is in the range $[T_1, T_1 + \Delta_{net}]$.
After adding the random noise, the resulting timestamp is in the range $[T_1, T_1 + \Delta_{net} + \Delta_{noise}]$.
Similarly, the resulting timestamp for $i_2$ is in the range $[T_2, T_2 + \Delta_{net} + \Delta_{noise}]$.
Therefore, if $T_2 > T_1 + \Delta_{net} + \Delta_{noise}$, $i_2$ will have a higher timestamp and appear after $i_1$ in the output.
\pushQED{\qed}
\qedhere \popQED

\begin{theorem}($\epsilon(2)$-Ordering equality)
  \label{theorem-ordering-equality}
After the global stabilization time (GST),
for all invocations $i_1$ and $i_2$ invoked at the same time,
$|Pr[i_1 \prec i_2 ] - Pr[i_2 \prec i_1 ]| \leq 1 - (1 - \Delta_{net} / \Delta_{noise})^2$.
\end{theorem}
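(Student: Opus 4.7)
The plan is to reduce the theorem to a standard density computation on the difference of two independent uniform noises, after first isolating how much the (adversary-controllable) assigned timestamps can differ when both invocations are issued at the same real time.

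\textbf{Step 1: Control the gap between assigned timestamps.} Let $i_1$ and $i_2$ be invoked at the same time $T$. By Assumption~\ref{assumption-sync-net}, after GST every correct node's timestamp for either invocation lies in $[T, T+\Delta_{net}]$. By Lemma~\ref{lemma-assigned-timestamp}, the assigned timestamps $ats_1, ats_2$ chosen by Pomp\={e} also lie in that interval, so $d := ats_2 - ats_1$ satisfies $|d| \le \Delta_{net}$. Crucially, the SRO's secrecy property guarantees that the random seed used to draw the noises is not exposed until after consensus on the slot completes; hence the per-command noises $r_1, r_2$, each uniform on $[0,\Delta_{noise})$ and independent of one another, are also independent of $ats_1, ats_2$.

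\textbf{Step 2: Rewrite the ordering probability as a symmetric density integral.} The final keys are $X_j = ats_j + r_j$, and we want to bound $\bigl|\Pr[X_1 < X_2] - \Pr[X_2 < X_1]\bigr|$. Setting $Y = r_1 - r_2$, continuity gives
\[
\Pr[X_1 < X_2] - \Pr[X_2 < X_1] \;=\; \Pr[Y < d] - \Pr[Y > d].
\]
Because $r_1, r_2$ are i.i.d., $Y$ is symmetric about $0$, so $\Pr[Y > d] = \Pr[Y < -d]$, and therefore the right-hand side equals $\Pr[-d \le Y < d]$. Taking absolute values, it suffices to bound $\Pr[|Y| < |d|]$ and then maximize over $|d| \le \Delta_{net}$.

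\textbf{Step 3: Evaluate using the triangular density.} The difference of two independent $U[0,\Delta_{noise})$ variables has the triangular density $f_Y(y) = (\Delta_{noise} - |y|)/\Delta_{noise}^2$ on $[-\Delta_{noise},\Delta_{noise}]$. A direct integration gives
\[
\Pr[|Y| < |d|] \;=\; \frac{2\Delta_{noise}|d| - d^2}{\Delta_{noise}^2} \;=\; 1 - \Bigl(1 - \tfrac{|d|}{\Delta_{noise}}\Bigr)^{2}.
\]
This expression is monotonically increasing in $|d|$, so plugging in the worst case $|d| = \Delta_{net}$ yields the claimed bound $1 - (1 - \Delta_{net}/\Delta_{noise})^2$.

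\textbf{Main obstacle.} The routine triangular-density computation is easy; the delicate point is justifying that $r_1$ and $r_2$ are genuinely independent of $ats_1, ats_2$ even in the presence of up to $f$ Byzantine replicas who are choosing timestamps adaptively. This relies on the SRO \emph{Secrecy} guarantee together with \sys's design choice to invoke \texttt{Reveal} only with signatures that certify consensus on the slot has already been reached, so that adversarial contributions to $ats$ must be committed before any information about the seed leaks. I would make this independence argument explicit at the start of the proof, since without it the symmetry step that turns the probability difference into $\Pr[|Y| < |d|]$ would not go through.
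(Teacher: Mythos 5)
Your proof is correct and takes essentially the same route as the paper's: bound the gap between the two assigned timestamps by $\Delta_{net}$ using the post-GST timestamp assumption and the bounded-assigned-timestamp lemma, then compute the ordering probability under independent uniform noise on $[0,\Delta_{noise})$, with the worst case attained when the gap equals $\Delta_{net}$. Your packaging via the triangular density of $r_1 - r_2$ and its symmetry is equivalent to the paper's direct double integral at the extremal assignment ($T+\Delta_{net}$ to $i_1$, $T$ to $i_2$), and your explicit appeal to SRO secrecy to justify independence of the noise from the (adversarially chosen) assigned timestamps makes precise a step the paper relies on but relegates to its design discussion.
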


\noindent{\em Proof(sketch).}
Suppose $i_1$ and $i_2$ are both invoked at time $T$.
By Assumption~\ref{assumption-sync-net} and Lemma~\ref{theorem-ordering-linearizability},
the assigned timestamps are in range $[T, T + \Delta_{net}]$.
By assigning  $T$ to $i_2$ and $T + \Delta_{net}$ to $i_1$, probability $Pr[i_1 \prec i_2 ]$ will be minimized.
Therefore,
\begin{align*}
  Pr[i_1 \prec i_2 ] &\ge \frac{1}{{\Delta_{noise}}^2}\int_{T+\Delta_{net}}^{T+\Delta_{net}+\Delta_{noise}}\int_{T}^{T+\Delta_{noise}}(t_1' < t_2')dt_2'dt_1'\\
  &\ge \frac{1}{2}(1 - \Delta_{net} / \Delta_{noise})^2
\end{align*}
The $\epsilon$ parameter of ordering equality can be derived.
\begin{align*}
  |Pr[i_2 \prec i_1 ] - Pr[i_1 \prec i_2 ]| &= |1 - 2*Pr[i_1 \prec i_2 ]|\\
  &\leq 1 - (1 - \Delta_{net} / \Delta_{noise})^2
\end{align*}
\pushQED{\qed}
\qedhere \popQED

We provide the full proof in appendix~\ref{s:appendix}, which also proves the general theorem below for $n>2$.
As suggested in Section~\ref{s:violate-ordering-equality}, real-world concerns about ordering could also arise due to violating ordering equality with three invocations ({\em i.e.,} enforcing $\epsilon(3)$-Ordering equality is necessary to limit sandwich attacks).

\begin{theorem}($\epsilon(n)$-Ordering equality)
  \label{theorem-ordering-equality}
After the global stabilization time (GST),
for all invocations $i_1..i_n$ invoked at the same time,
for all two total orders of $i_1..i_n$ denoted as $\succ_{1}$ and $\succ_{2}$,
$|Pr[\succ_{1}]-Pr[\succ_{2}]| \leq 1/n!((1 + \alpha)^n - (1 - \alpha)^n - n\alpha^n)$ where $\alpha$ denotes the ratio $\Delta_{net} / \Delta_{noise}$.
\end{theorem}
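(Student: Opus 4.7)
The plan is to lift the $n=2$ geometric argument to general $n$ by casting each probability as the volume of an ordered simplex intersected with the joint support of the perturbed timestamps, exploiting the exchangeability of the noise, and performing a geometric decomposition of the bounding box.

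By Assumption~\ref{assumption-sync-net} and Lemma~\ref{lemma-assigned-timestamp}, for invocations $i_1, \ldots, i_n$ issued at a common time $T$, each assigned timestamp $ats_k$ lies in $[T, T + \Delta_{net}]$; adding the independent uniform noise $r_k \in [0, \Delta_{noise})$ yields $u_k = ats_k + r_k$, and the output order is the sorted order of the $u_k$. Writing $s = \Delta_{noise}$, $\delta = \Delta_{net}$, $\alpha = \delta/s$, and $a_k = ats_k - T \in [0, \delta]$, each probability satisfies $Pr[\succ_\sigma] = \mathrm{vol}(R_\sigma \cap JS)/s^n$, where $R_\sigma = \{u : u_{\sigma(1)} < \cdots < u_{\sigma(n)}\}$ and $JS = \prod_k [ats_k, ats_k + s)$. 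Since the $r_k$ are i.i.d., relabeling coordinates reduces the task to bounding $|g(b) - g(b')|$, where $g(b) := Pr[u_1 < \cdots < u_n]$ with $u_k$ uniform on $[b_k, b_k + s)$, and $b, b'$ are two permutations of a common multiset in $[0, \delta]^n$ chosen adversarially; the extremal case is $b$ ascending versus $b'$ descending.

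The next step is to partition the bounding box $[T, T + s + \delta]^n$ into $3^n$ axis-aligned sub-boxes via the cuts at $T + \delta$ and $T + s$ in each coordinate, producing strata $L$, $M$, $U$ of lengths $\delta$, $s - \delta$, $\delta$. Within any sub-box the $\sigma$-ordering forces all $L$-indices to precede all $M$-indices to precede all $U$-indices, and, by exchangeability, intra-stratum orderings are uniform. The pure core $M^n$ lies entirely in $JS$ for every adversarial $ats$ and contributes $(1-\alpha)^n/n!$ to $Pr[\succ_\sigma]$ for every $\sigma$, while the full bounding box gives $Pr[\succ_\sigma] \le (1+\alpha)^n/n!$. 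Together these imply the weaker bound $|Pr[\succ_1] - Pr[\succ_2]| \le ((1+\alpha)^n - (1-\alpha)^n)/n!$, which exceeds the claimed bound by exactly $n \alpha^n/n!$ and therefore must be sharpened.

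The main obstacle is showing that an additional $n \alpha^n/n!$ of shell volume does not move freely between permutations and hence cannot contribute to $|g(b) - g(b')|$. A clean route is to enumerate the shell sub-boxes by their extreme signature, that is by which coordinates sit in $L$ or $U$, and to express $g(b) - g(b')$ as a signed sum over these sub-boxes, with slab masses given by products of $(\delta - a_k)$, $(s - \delta)$, and $a_k$ factors. Pairing each $L$-corner cube with its mirror $U$-corner, and each mixed single-extreme slab with its reverse, one can show that under the ascending/descending pair $b, b'$ the symmetric portion of the signed sum accounts for exactly $n \alpha^n/n!$ of invariant volume, matching the desired correction. Combined with monotonicity of $g$ in the sorting order of its argument, which pins the extremal multiset at the endpoints $\{0, \delta\}$, this yields $|Pr[\succ_1] - Pr[\succ_2]| \le \tfrac{1}{n!}\bigl((1+\alpha)^n - (1-\alpha)^n - n \alpha^n\bigr)$, with the detailed combinatorial bookkeeping deferred to Appendix~\ref{s:appendix}.
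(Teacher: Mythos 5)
Your setup is sound, and the two easy pieces you extract are exactly the two easy pieces of the paper's argument: the pure core $M^n = [T+\Delta_{net}, T+\Delta_{noise}]^n$ always lies inside the joint support and, by exchangeability, contributes $(1-\alpha)^n/n!$ to every permutation (this is the paper's tight lower bound $Pr[\succ] \ge \frac{1}{n!}(1-\alpha)^n$), and the bounding box gives $Pr[\succ] \le \frac{1}{n!}(1+\alpha)^n$. But the entire content of the theorem beyond the trivial bound $\frac{1}{n!}((1+\alpha)^n-(1-\alpha)^n)$ is the $\frac{n\alpha^n}{n!}$ correction, and your proposal does not prove it: the claim that ``the symmetric portion of the signed sum accounts for exactly $n\alpha^n/n!$ of invariant volume'' is asserted, not derived, and the ``detailed combinatorial bookkeeping'' is deferred to the paper's own appendix, which is circular. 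A quick audit shows the bookkeeping is not routine: for the two corner cubes $L^n$ and $U^n$, the covered volumes are $\prod_k(\Delta_{net}-a_k)$ and $\prod_k a_k$, whose sum is at most $\Delta_{net}^n$, so the corners alone recover only about $\alpha^n/n!$ of the needed $n\alpha^n/n!$ once $n>2$; the remaining deficit is spread across the mixed $L/M/U$ sub-boxes in a way your pairing sketch does not account for. Two auxiliary claims are also unjustified: that the adversary's extremal assignment puts every $a_k$ at an endpoint of $\{0,\Delta_{net}\}$ with the worst case being ascending versus descending, and the implicit monotonicity of $g$.

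The paper closes this gap differently: rather than comparing the two permutation volumes directly, it proves a \emph{tight upper bound} $Pr[\succ] \le \frac{1}{n!}((1+\alpha)^n - n\alpha^n)$ on each individual permutation probability, by induction on $n$. The induction conditions on the perturbed timestamp $t_1'$ of the first command in $\succ$ (optimally assigned $a_1 = 0$), splits on whether $t_1' \le T+\Delta_{net}$ or not, and in the first branch re-applies the induction hypothesis to the remaining $n-1$ commands with a shrunken ratio $\alpha' = \alpha - (t_1'-T)/\Delta_{noise}$; integrating the two branches yields exactly $(1+\alpha)^n - n\alpha^n$. Subtracting the lower bound then gives the stated $\epsilon(n)$. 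Note also that the paper's adversary is adaptive (it may choose later assigned timestamps after observing earlier perturbed ones), which your static volume formulation does not capture; the paper's bound is therefore proved against a strictly stronger adversary. To repair your proof you would need to either carry out the full signed-sum accounting over all $3^n$ strata and prove it sums to at least $n\alpha^n/n!$ of non-transferable volume for every admissible $(a_1,\dots,a_n)$, or fall back on establishing the per-permutation upper bound as the paper does.
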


\paragraph{Choosing the $\Delta_{noise}$ parameter.}
With Pomp\={e} and many other protocols, system designers tune their systems by choosing $\Delta_{net}$.
Suppose they now want to mitigate systemic bias, particularly
front-running and sandwich attacks.
In this case, they would assume $n=3$ and tune $\Delta_{noise}$ based on a target $\epsilon$ in \sys.
In certain legal contexts, $\epsilon\approx 0.1$ is used for equal opportunity:
the so-called four-fifth rule~\cite{originalempolymentact,empolymentact}
says that if two candidates from different ethnic groups are equally qualified for a job, the difference between their chances of getting an offer cannot be more significant than $45\%$ vs. $55\%$.
This rule has also influenced machine learning fairness ({\em e.g.,} demographic parity)~\cite{barocas-hardt-narayanan}.

\begin{figure*}[]

\fontsize{10pt}{12pt}\selectfont
\begin{tabular*}{2.1\columnwidth}{@{\extracolsep{\fill}}l l}
\toprule


Clients from different geolocation (or using different network facilities) would be treated equally by \sys. & \S \ref{s:eval:bias}\\

The expected profit of sandwich attacks could decrease significantly in \sys. & \S \ref{s:eval:sandwich}\\


Baselines could all be significantly biased and there is a trade-off in \sys
between the two ordering properties. & \S \ref{s:eval:tradeoff}\\

For performance, \sys maintains the same throughput as Pomp\={e}
and incurs moderate latency overhead. & \S \ref{s:eval:sro}, \ref{s:eval:end-to-end}\\

\bottomrule
\end{tabular*}

\vspace{-5px}
\caption{Summary of evaluation results.}
\label{fig:eval-summary}
\vspace{-5px}
\end{figure*}

\subsection{Generalizing to multiple relevant features}
\label{s:design-discuss}

The two ordering properties use invocation time as the only relevant feature.
However, in the real world, there could be more relevant features such as \textit{transaction fee}.
Specifically, a client must pay specific fees to the nodes for executing its commands, and many blockchains have fixed transaction fees. In contrast, other blockchains allow clients to pay higher fees in exchange for a more favorable position in the ledger.
While system designers make subjective decisions on what features are relevant, our results can be generalized to scenarios with multiple relevant features.

Specifically, system designers decide a formula that takes the measurement of each relevant feature and outputs a \textit{score}
({\em e.g.,} a formula with invocation time and transaction fee as input).
Similar to time, the score is a number, and the score given by a correct node may be inaccurate.
We can thus regard the $\Delta_{net}$ parameter as the maximum difference between an accurate score and a score provided by a correct node when reasoning about the ordering properties.

\section{Implementation}
\label{s:impl}

We implement two SRO variants based on Section~\ref{s:design-sro-tee} and Section~\ref{s:design-sro-crypto}.
For the first variant,
we choose Intel's software guard extensions (SGX)~\cite{sgx} as the TEE.
Random numbers are generated by the function \texttt{SHA256(seed + k)} where \texttt{seed} is the random seed decided during SRO initialization and \texttt{k} is the parameter of \texttt{Reveal}.
The \texttt{SHA256} function is provided by the official SGX library for Linux~\cite{sgxlinux}.
This library does not provide a \texttt{SHA512} function, which would make the SRO more secure in the blockchain context.
Note that the \texttt{Int} type in Figure~\ref{fig:arch} and Figure~\ref{fig:tvrf} 
does not have to be the typical 4-byte integer.
In blockchains, 32-byte and 64-byte integers are both commonly used.

For the second variant, we start with an implementation of threshold VRF in C++~\cite{galindo2021fully,dvrf} and modify the \texttt{Produce} interface by adding signature verification.
Random numbers here are the \texttt{SHA512} hash of signatures combining a threshold of shares.
The default configuration of this implementation uses the mcl cryptographic library~\cite{mcl} and the BN256 curve.
Unlike many BFT systems, cryptographic libraries are not the performance bottleneck of \sys.
As we will explain later in the evaluation,
the overhead is dominated by the noise and trade-off described in Section~\ref{s:design-quantify}.
Therefore, we have chosen the cryptographic libraries based on the convenience of implementation.
In the two SRO variants, signature verification in the \texttt{Reveal} interface is implemented with the secp256k1 library from Bitcoin~\cite{secp256k1}, which is also used by the Pomp\={e} implementation.

\paragraph{Ease of integration.}
We integrate the two SRO variants with a Pomp\={e} implementation based on HotStuff~\cite{pompe-hotstuff}.
Recall that Pomp\={e} employs any leader-based BFT SMR protocol and transforms it into a new protocol that enforces ordering properties.
This Pomp\={e} implementation employs HotStuff because HotStuff is the foundation of Diem~\cite{diem}, an influential blockchain project.
Our modifications involve modest system effort and do not modify the complex components that achieve consensus.
Instead, we only modify the component producing the totally ordered output, and we wrap this component into a new one that applies the random noise and waits for the new stability condition, as described in Section~\ref{s:design-integration}.
Our experience suggests that it could be easy to integrate other consensus systems with an SRO for the purpose of equal opportunity.

\section{Experimental evaluation}
\label{s:eval}

We ask two main questions in our evaluation: (1) How do the new
ordering properties mitigate front-running, geographical bias, and
sandwich attacks?  Why are baselines vulnerable?; and (2) What is the
end-to-end performance of \sys?  Figure~\ref{fig:eval-summary} shows a
summary of our findings.

We choose three baselines: HotStuff~\cite{yin19hotstuff}, Pomp\={e}~\cite{zhang2020byzantine} and Themis~\cite{kelkar2021themis} representing different existing fairness concepts.
HotStuff adopts a \textit{rotating leadership} fairness concept, and we configure HotStuff by making all nodes serve as the leader for the same amount of time.
Themis guarantees a fairness property called $\gamma$\textit{-batch-order-fairness},
and we choose $\gamma=1$ which informally means that if all correct nodes receive $i_1$ before $i_2$, then $i_1$ should be ordered before $i_2$ in the system output.
Themis only provides the simulation code instead of an actual system implementation.
Pomp\={e} adopts the concept of \textit{removing oligarchy} and, as shown in Section~\ref{s:design-integration},
the output order in Pomp\={e} is not unilaterally decided by a leader.
We implement two variants of SRO as described in Section~\ref{s:impl} and integrate them with Pomp\={e} as \sys.

\paragraph{Configuration and metrics.}
We run \sys and the three baselines on 52 machines in CloudLab~\cite{cloudlab} (m400, 8-core ARMv8, 64GB memory, Ubuntu Linux 20.04 LTS).
We run the SROs on a separate set of machines with the Intel Xeon Silver 4410Y and Intel Xeon D-1548 CPUs
because our two SRO implementations rely on cryptographic acceleration, and the TEE-based SRO uses Intel SGX.
Network latency across different geolocations is emulated using the Linux traffic control (\texttt{tc}) utility.

Statistics show that the top countries running Ethereum nodes are:
US (40\%),
Germany (12\%),
Singapore (5\%),
UK (4\%),
Netherlands (4\%),
France (3\%),
Japan (3\%),
Canada (3\%),
Australia (3\%) and
Finland (3\%)~\cite{ethernodes}.
To answer the evaluation question about fairness, 
we run 80 nodes simulating these statistics and then evaluate the impact of bias or attacks.
For the US, we simulate 15 nodes in Washington, 15 in San Francisco, and 10 in Austin, representing the eastern, western, and central US.
For other countries, we simulate all the nodes in one of their major cities.
Latency information is from WonderNetwork~\cite{wondernetwork}.
The maximum latency in such a configuration is $296ms$ from Canberra in Australia to Oulu in Finland.
We thus assume $\Delta_{net}=300ms$ which is similar to the choice of $400ms$ in a real-world blockchain protocol~\cite{solanaalpenglow}.
We use this configuration of Ethereum to reflect the uneven distribution of blockchain nodes despite all systems we evaluate being permissioned systems.

Our metric for fairness for two commands is the difference between the probabilities of the two possible relative orders of the two commands in the ledger.
When two commands are invoked at the same time in Washington and London,
the probability of the Washington invocation appearing first in the ledger is
0.76 (denoted as $Pr[W\prec L]=0.76$) in our measurement of HotStuff.
Therefore, $Pr[W\prec L] - Pr[L\prec W] = 0.76 - 0.24 = 0.52$,
which is much higher than a reasonable target $\epsilon$ ({\em e.g.,} $0.1$).

To answer the evaluation question about performance, 
we measure the latency and throughput of Pomp\={e} and \sys by scaling the systems from 4 to 49 nodes.
In these experiments,
each node runs on a separate machine, and each machine is configured with a $150ms$ outbound network latency with the \texttt{tc} utility.
We fix a setup with 800 concurrent clients, and
each client invokes commands in a closed loop ({\em i.e.,} it waits for the consensus result of its currently outstanding command before invoking the next one).
We aim to measure the overhead of \sys over Pomp\={e} in a geo-distributed setup.
As for batching, we use $1500ms$ as the time interval associated with each consensus slot, and each assigned ({\em i.e.,} median) timestamp is associated with only one command.

\subsection{Bias and front-running}
\label{s:eval:bias}

Figure~\ref{fig:geo-bias-table-1} shows that geographical bias could be significant in the baselines.
The output order produced by Pomp\={e} and Themis is deterministic:
simultaneous invocations from the four cities are always ordered as
Washington $\prec$ London $\prec$ Munich $\prec$ Tokyo.
The reason is that in our configuration, simulating the node distribution of Ethereum, Washington has a lower network latency than most of the 80 nodes in the other cities.
Therefore, in Pomp\={e}, the median timestamp of any quorum for the Washington invocation will be lower.
The fairness property of Themis directly implies that, in this configuration, Washington should be ordered first.

HotStuff rotates leadership, and the invocation from Tokyo would have a chance to appear first in the ledger when a Tokyo node serves as the leader.
However, this chance is still low compared to Munich invocations.
Specifically, we find that $Pr[\text{M} \prec \text{T}] = 0.925$, 
leading to the $0.85$ in the table.
This difference increases to $0.93$ when compared to Washington.
While Tokyo and Washington are geographically distant,
a similar bias can happen between nearby cities.
London and Munich are both in Europe, but $Pr[\text{L} \prec \text{M}]$ is $0.725$ leading to the $0.45$ in the table.

\begin{figure}[]
\fontsize{10.2pt}{12pt}\selectfont
\begin{tabularx}{1\columnwidth}{|l|l|l|}
\hline
\textbf{Baselines} & \textbf{HotStuff} & \textbf{Pomp\={e}/Themis}
\\ \hline
$Pr[W\prec L] - Pr[L\prec W]$
&  0.52 & 1 \\
$Pr[W\prec T] - Pr[T\prec W]$
&  0.93 & 1 \\ \hline
$Pr[L\prec M] - Pr[M\prec L]$
&  0.45  & 1 \\ \hline
$Pr[M\prec T] - Pr[T\prec M]$
&  0.85 & 1 \\ \hline
\end{tabularx}

\vspace{-5px}
\caption{Geographical bias measured in HotStuff, Pomp\={e} and Themis.
W, L, M and T stand for Washington, London, Munich and Tokyo.
For two simultaneous invocations from two cities,
$Pr[\text{City1} \prec \text{City2}]$ stands for the probability of the invocation from City1 being the first in the ledger.
}
\label{fig:geo-bias-table-1}
\vspace{-10px}
\end{figure}

\begin{figure}[]
\fontsize{9.7pt}{12pt}\selectfont
\begin{tabularx}{1\columnwidth}{|l|l|l|}
\hline
\textbf{\sys} & $\Delta_{noise}=\Delta_{net}$ & $\Delta_{noise}=5*\Delta_{net}$
\\ \hline
$Pr[W\prec L] - Pr[L\prec W]$
&  0.036  & 0.007 \\ 
$Pr[W\prec M] - Pr[M\prec W]$
&  0.158 & 0.033 \\ 
$Pr[W\prec T] - Pr[T\prec W]$
&  0.399 & 0.087 \\ \hline
$Pr[L\prec M] - Pr[M\prec L]$
&  0.119  & 0.025 \\
$Pr[L\prec T] - Pr[T\prec L]$
&  0.367  & 0.082 \\ \hline
$Pr[M\prec T] - Pr[T\prec M]$
&  0.269  & 0.056 \\ \hline
\end{tabularx}

\vspace{-5px}
\caption{Geographical bias measured in \sys with $\Delta_{noise}=\Delta_{net}$ ({\em i.e.,} 300ms)
and $\Delta_{noise}=5*\Delta_{net}$ ({\em i.e.,} 1500ms).}
\label{fig:geo-bias-table-2}
\vspace{-10px}
\end{figure}

Figure~\ref{fig:geo-bias-table-2} shows how geographical bias could be effectively reduced in \sys.
By adding a random noise sampled from $[0, \Delta{net}]$ ({\em i.e.,} $[0, 300ms]$) to the median timestamp of Pomp\={e},
all probability differences could be controlled under $0.4$.
If $\epsilon =0.1$ is a target for the system,
system designers could then choose $\Delta_{noise}=5*\Delta_{net}$ ({\em i.e.,} $1500ms$), and the worst case bias across the four cities could be controlled under $0.087$ as shown in the third line.
In real-world deployments,
$\Delta_{noise}$ could be empirically chosen by considering the most distant clients for fairness.


Front-running typically occurs when one client has lower network latencies to most nodes than another, as Washington does in this configuration.
By adding random noise,
\sys could give all the clients involved in the liquidation events
(explained in Section~\ref{s:violate-ordering-equality})
more balanced chances of obtaining the \$42.6M profit over 32 months.

\begin{figure*}[]
  \centering
  \includegraphics[width=\textwidth]{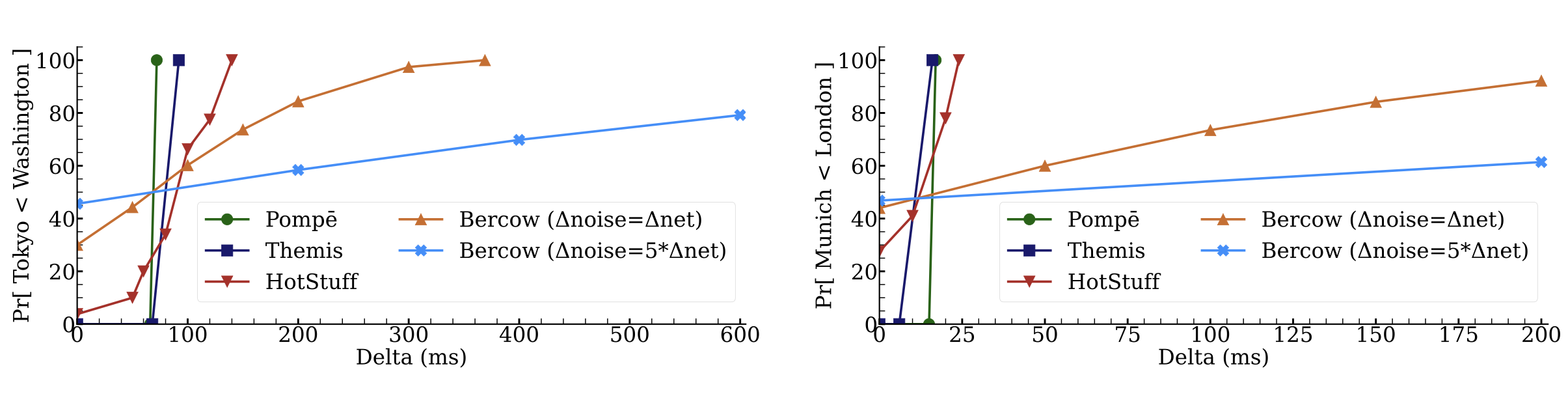}
  \caption{Trade-off between $\epsilon$-Ordering equality and $\Delta$-Ordering linearizability in \sys.
  The left figure shows the results for two clients in Washington and Tokyo. The right figure shows the results for two clients in London and Munich.
 }
  \label{fig:tradeoff}
  \vspace{-5px}
\end{figure*}

\subsection{Sandwich attacks}
\label{s:eval:sandwich}

An exchange maintains a pool of some token A ({\em e.g.,} USD) and some token B ({\em e.g.,} CNY).
For example, people traveling from the US to China may put some USD into the pool and take some CNY away.
Similar pools exist on blockchains,
and the trading volume of Uniswap, a decentralized exchange, has exceeded one trillion dollars~\cite{uniswap}.

Pools in these exchanges follow a constraint: \textit{amount of tokenA} $*$ \textit{amount of tokenB} $=$ \textit{constant},
which is called the automated market makers (AMMs) approach~\cite{amm}.
Suppose the constant is $1800$; the number of tokens A and B in the pool could be, for instance, $\langle 45, 40\rangle$, $\langle 60, 30\rangle$, or $\langle 75, 24\rangle$.
Say $\langle 75, 24\rangle$ is the current status, and Alice needs $15$ token A.
Alice can put $6$ token B into the pool and take $15$ token A away so that the pool state moves to $\langle 60, 30\rangle$.

The sandwich attack works as follows.
After seeing Alice's transaction, an attacker, Bob, first buys $15$ token A, moving the pool status to $\langle 60, 30\rangle$.
Alice now needs to pay $10$ (instead of $6$) token B in order to exchange for $15$ token A and move the pool status to $\langle 45, 40\rangle$.
Bob can thus exchange the $15$ token A back to $10$ token B, making a $4$ token B profit.
The three steps reflect the three invocations in Figure~\ref{fig:sandwich}.
If the market prices of the tokens are $\$100$ and $\$200$, respectively, we will get the dollar values in Figure~\ref{fig:sandwich}.

The success of sandwich attacks depends on whether the attacker's first invocation ($i_1$ in Figure~\ref{fig:sandwich}) can front-run the victim's invocation
since the attacker can always close the attack by delaying its last invocation.
Suppose the network conditions of the attacker and victim are similar to those of London and Munich.
Figure~\ref{fig:geo-bias-table-1} shows that, in systems like Pomp\={e} or Themis, the attacker can succeed with a high probability.
In \sys, the attacker would have a lower expected profit because of $\epsilon$-Ordering equality.

\subsection{Trade-off between $\epsilon$-Ordering equality and $\Delta$-Ordering linearizability}
\label{s:eval:tradeoff}

Figure~\ref{fig:tradeoff} shows the relation between ordering and invocation time for two invocations from Washington and Tokyo (or London and Munich).
Specifically, the x-axis represents how long the Tokyo client invokes its commands \textit{before} the Washington client.
The y-axis represents the probability of the Tokyo invocation being ordered first.
Intuitively, when the Tokyo client invokes its command early enough, its probability of being ordered first will be $100\%$.

For \sys, with $\Delta_{noise}=\Delta_{net}$, the two invocations are treated less equally when the x-axis is $0$ compared to $\Delta_{noise}=5*\Delta_{net}$,
but it leads to a lower $\Delta$ ({\em i.e.,} $369ms$) for ordering linearizability.
In other words, for \sys with $\Delta_{noise}=\Delta_{net}$, the y-axis will reach $100\%$ when the x-axis is $369ms$.
With $\Delta_{noise}=5*\Delta_{net}$, the probability of the Tokyo invocation ordered first is only $69.8\%$ even if it is invoked $400ms$ earlier.
This shows the trade-off between removing the influence of irrelevant features and preserving the signal strength of the relevant features when adding the random noise in \sys.

The results for Pomp\={e} and Themis look similar.
For Pomp\={e}, the Tokyo invocation is guaranteed to be ordered first if it is invoked more than $72ms$ earlier, while this threshold is $92ms$ for Themis.
The reason is that Pomp\={e} and Themis order commands based on when nodes receive the invocations.
When the difference in invocation time is above this threshold, most of the nodes in our configuration will receive the Tokyo invocation earlier.
The result for HotStuff starts from $3.8\%$ when the x-axis is $0$ and grows to $100\%$ when the x-axis is $140ms$.
This result follows the intuition that the Tokyo invocation will have better chances of reaching the leader node earlier if it is invoked earlier.

The right part of Figure~\ref{fig:tradeoff} shows the results of invocations from Munich and London.
Compared to the left part, we make two observations.
First, for HotStuff and \sys, the chances are closer to $50\%$ when the x-axis is $0$.
Second, for Pomp\={e} and Themis, the threshold becomes $17ms$, which is lower than $72ms$ or $92ms$.
Both observations are because these two cities are geographically closer.

Given $\epsilon=0.1$ as a target, our results in Section~\ref{s:design-quantify} show that $\Delta_{noise}=20*\Delta_{net}$ is necessary to provide a guarantee of this target.
This is because our results assume the worst case: two simultaneous invocations would obtain timestamps that differ by $\Delta_{net}$.
In practice, the worst-case difference could be much lower than $\Delta_{net}$ so that system designers can choose this parameter according to their actual deployment setup.

\subsection{Latency of secret random oracles}
\label{s:eval:sro}

\begin{figure}[]
\fontsize{10pt}{12pt}\selectfont
\begin{tabularx}{1\columnwidth}{|*{3}{X|}}
\hline
\begin{tabular}[c]{@{}l@{}}\end{tabular} & \begin{tabular}[c]{@{}l@{}}|\texttt{s}|=0 (\texttt{base})\end{tabular} & \begin{tabular}[c]{@{}l@{}}|\texttt{s}|=200\end{tabular}
\\ \hline
TEE
& 3us
& \texttt{base}+20.2ms \\ \hline
TVRF (67/100)
& 95.2+4.3ms
& \texttt{base}+19.9ms \\ \hline
TVRF (133/200)
& 185.7+9.7ms
& \texttt{base}+19.9ms \\ \hline
\end{tabularx}

\caption{Latency of the \texttt{Reveal} interface in different SRO implementations.
|\texttt{s}| denotes the number of signatures to be verified in the second parameter of \texttt{Reveal}.
The numbers in the parentheses are the threshold and total number of nodes for TVRF.
The \texttt{base} case of TVRF consists of two latency results for generating and combining shares.
}
\label{fig:sro-latency}
\vspace{-10px}
\end{figure}

Figure~\ref{fig:sro-latency} shows the latency of the two SRO implementations. 
We show two cases of |\texttt{s}|=0 and |\texttt{s}|=200,
where |\texttt{s}| denotes the number of signatures to be verified in the second parameter of \texttt{Reveal}.
When |\texttt{s}|=0, \texttt{Reveal} does not verify signatures, and the latency is solely for generating random numbers.
For the TEE variant, the latency consists of entering an SGX enclave and computing a \texttt{SHA256} function, which takes only $3us$.
For the threshold VRF variant, the latency consists of three parts: (1) generating shares, (2) collecting the shares over the network,
(3) combining a threshold of shares.
The results in Figure~\ref{fig:sro-latency} show (1) and (3).
Under a setup of 100 nodes and 67 as the threshold, the latency of producing a share is $95.2ms$, and
the latency of combining 67 shares is $4.3ms$.
When moving to a setup of 200 nodes with 133 as the threshold, the two latencies roughly double.
This is because threshold VRF algorithms have a high constant factor in their computational complexity,
which dominates latency in these setups.
Lastly, when $|s|=200$, the latency of verifying the 200 signatures is around $20ms$, both within and outside an SGX enclave.

These results show a trade-off between performance and decentralization.
The SRO based on SGX has a much lower latency, making it more practical,
but it requires trusting a centralized party, Intel.
In the following experiments,
we choose performance in this trade-off and use the TEE variant of SRO,
since we have observed TEE-based projects running on the Ethereum Sepolia testnet~\cite{thirdteeblog}.

\subsection{End-to-end performance of \sys}
\label{s:eval:end-to-end}

\begin{figure}[]
  \centering
  \includegraphics[trim=0.5in 0.5in 0.5in 2.5in, clip, width=0.5\textwidth]{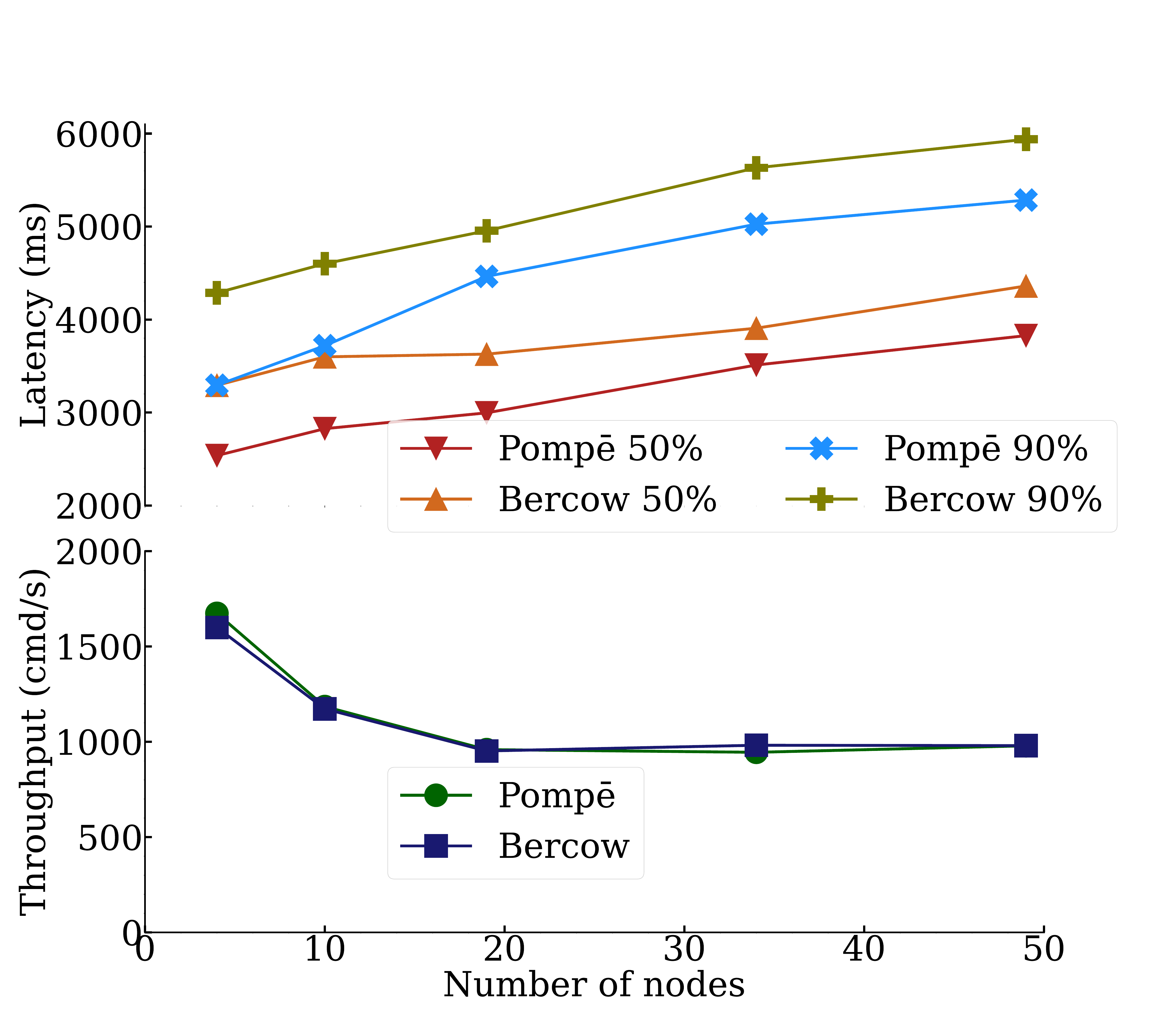}
  \caption{End-to-end performance of Pomp\={e} and \sys with $\Delta_{noise}=5*\Delta_{net}$ ({\em i.e.,} $1500ms$).
 }
  \label{fig:scalability}
  \vspace{-10px}
\end{figure}

Figure~\ref{fig:scalability} shows the latency and throughput of \sys and Pomp\={e} when scaling from 4 to 49 nodes.
The configurations of these experiments are described at the beginning of this section.
The x-axis represents the number of nodes.
The y-axis of the upper part represents end-to-end latency in milliseconds.
The y-axis of the lower part represents system throughput in command per second.

\sys achieves the same throughput as Pomp\={e},
meaning that integrating with an SRO does not impact throughput.
This is expected because obtaining random numbers from an SRO is much cheaper than consensus,
so the consensus part of \sys is the throughput bottleneck.

As for latency,
the median latency for 49 nodes increases from $3827ms$ in Pomp\={e} to
$4361ms$ in \sys, increasing by $534ms$.
The 90 percentile tail latency increases from $5285ms$ to $5939ms$,
increasing by $654ms$.
The relative increases are only $14\%$ and $12.4\%$, respectively.

In real-world blockchains, the typical end-to-end latency is
on the magnitude of minutes ({\em e.g.,} Bitcoin) or seconds ({\em e.g.,} Ethereum)
which are higher than the latency results in Figure~\ref{fig:scalability}.
However, the random noise being added could be similar to our experiments ({\em e.g.,} $\Delta_{noise}=1500ms$),
which would lead to a potentially lower relative latency increase.

\section{Related work}
\label{s:related}

\paragraph{Traditional BFT SMR systems.}
There is a long line of work on traditional BFT SMR systems~\cite{kotla07zyzzyva,haeberlen07peerreview,clement09upright,gueta18sbft,kotla04high,clement09making,hendricks10zzyzx,castro03base,li07beyond,yin03separating,martin06fast,porto15visigoth,liu16xft,behl17hybrids, sousa2018byzantine,liu19scalable,suri2021basil,liu23flexible}, starting with the seminal work of PBFT~\cite{castro02practical}.
These works focus on enforcing safety and liveness, removing or constraining Byzantine influence, improving performance or theoretical complexity, etc.
Our work focuses on how to choose the output order in SMR, which is not considered by the traditional specification.
Some works ~\cite{chun07attested,levin09trinc,kapitza12cheapbft,behl17hybrids} use trusted hardware to increase the ratio of Byzantine nodes that the system can tolerate.
Other works ~\cite{gilad17scaling,christian2000Random,galindo2021fully} use randomness to elect a committee or achieve safety and liveness in a fully asynchronous model.
Unlike these works, our work uses trusted hardware to provide a fault-tolerant source of randomness and applies randomness for equal opportunity.

\paragraph{Rotating leadership.}
Some works adopt a leader and rotate the leadership frequently, but they focus on reducing theoretical complexity or preventing faulty leaders from degrading performance.
Aardvark~\cite{clement09making} employs periodic leader changes to ensure a certain degree of performance in the presence of faulty leaders.
Aardvark sets an expectation on the minimal throughput that a leader must ensure and triggers a leader change if the current leader fails to meet such an expectation.
Different from Aardvark,
HotStuff~\cite{yin19hotstuff} employs leader rotation and optimizes the communication complexity.
Specifically, HotStuff's communication complexity is linear in the number of nodes, which makes it more suitable for blockchains.
Adopted by Diem~\cite{diem},
rotating leadership in HotStuff aims to provide some sense of fairness in a permissioned blockchain.
Our work instead specifies and enforces a concrete notion of fairness, namely equal opportunity, and our evaluation results show that rotating leadership could cause significant bias in a real-world deployment.

\paragraph{Removing oligarchy.}
{\em Leaderless} protocols argue against having a leader node who can unilaterally decide the output order.
EPaxos~\cite{moraru13more}, or Egalitarian Paxos, is an SMR protocol that attempts to make the system egalitarian.
While the concept of egalitarianism is closely related to equal opportunity,
EPaxos does not specify nor enforce egalitarian ideals except being a leaderless protocol.
Byzantine oligarchy~\cite{zhang2020byzantine} is a first attempt to specify the goal of leaderless protocols in the context of ordering and
Pomp\={e} is the first leaderless protocol 
that provably removes a Byzantine oligarchy. 
To achieve this, 
Pomp\={e} requires a
client to collect timestamps from a quorum of nodes, and the median is used to order a command.
However, our evaluation shows that using such a median timestamp could make the system even more biased and less equal than prior work such as HotStuff.
In contrast to EPaxos and Pomp\={e}, \sys enforces a notion of fairness 
that is unbiased by definition.
These definitions and the design of \sys were first published in a thesis~\cite{zhang2024ordered}.

\paragraph{Decentralized first-come-first-serve.}
Some recent works define and enforce fairness concepts related to first-come-first-serve~\cite{kelkar20order, kelkar2022order,kelkar2021themis,cachin2022quick,ramseyer2023fair,ke2024separation}.
Specifically,
these protocols enforce variants of the receive-order-fairness property~\cite{kelkar20order}, which essentially says that if a majority of the nodes receive an invocation first, it should be ordered first in the output.
We argue that this property can be unfair because, without distinguishing relevant features from irrelevant ones, it can amplify systemic bias in real-world blockchains, as shown in the evaluation.

While these works enforce specific properties related to first-come-first-serve,
the framework of ordered consensus makes it possible to prove that it is impossible, in general, to prevent Byzantine replicas from manipulating the order ({\em e.g.,} from conducting front-running)~\cite{zhang2020byzantine}.
This result is inspired by
the Arrow’s impossibility theorem~\cite{arrow2012social}
and the Gibbard-Satterthwaite impossibility theorem~\cite{gibbard1973manipulation, satterthwaite1975strategy}
from the field of social choice theory.
In the past two decades, computer scientists became interested in social choice theory, leading to the creation of the field of computational social choice~\cite{brandt2016handbook}.

\paragraph{Game theory.}
The BAR model\cite{abraham2011distributed,aiyer05bar} explores how to
connect Byzantine fault tolerance to game theory. The core of the
connection is adopting the Nash theorem~\cite{nash1951non}, which
states that every normal-form game must have an equilibrium. The Nash
theorem connects the rationality model~\cite{von1947theory,savage1972foundations} with the Brouwer fixed-point theorem ({\em i.e.,} a
fixed point is interpreted as an equilibrium). As practitioners, we
find the concept of equal opportunity and its violations more
prevalent in real-life scenarios than Nash equilibrium. We do reuse
the key concept of the expected value from the rationality model when
explaining violations of equal opportunity in
Section~\ref{s:violate-ordering-equality}. We are also inspired by
one of the axioms at the core of the rationality
model~\cite{savage1972foundations} when defining consistency in
Section~\ref{s:two-principles}. However, unlike BAR, we do not reuse
the concept of Nash equilibrium in this work.

\paragraph{Equal opportunity in real-life scenarios.}
The principles of impartiality and consistency have been scrutinized
in the context of how society allocates resources. In his
book~\cite{young1995equity}, Young studies the two principles in a
variety of contexts, from employment to kidney exchange, and discusses
how they are embodied in key pieces of legislation
~\cite{civilrightact,originalempolymentact,organact,organactnew}. The
book includes a proof that a point system is the only mechanism that
satisfies both impartiality and consistency. Our work adopts this book's approach: we use the invocation time as the score for ordering and
use a secret random oracle to break ties.

Financial regulation laws require financial exchanges to be impartial
to all traders. A recent expos\'e~\cite{lewis14flashboys}, however,
has concluded that high-frequency traders have routinely engaged in
market-exploiting behaviors, such as aggressive latency optimizations and front-running by exploiting their location or the availability of fast network connections--which our framework would stigmatize as irrelevant features.

Kidneys for transplant used to be exchanged through a free market, and
wealthy patients had a better chance of getting kidneys. This raised
fairness concerns and led to legislation that transferred the
operation to the government in order to overcome such bias towards
wealthy patients~\cite{organact}. While this law enforced
\textit{impartiality} between the rich and poor, \textit{consistency}
became a concern.  The first algorithm proposed was inconsistent, and
the order of two patients getting kidneys could be switched due to a
third patient joining the system. As a result, a new algorithm was
proposed as an amendment later, enforcing the consistency
principle~\cite{organactnew}. Details of the relevant features and
the point system in this case have been discussed in Chapter 2 of this
book~\cite{young1995equity}.

In the Olympic games, a game decides an order of athletes and needs to give equal opportunities.
The Olympic rules specify which relevant features should be measured in each game through equipment or human judges.
A key irrelevant feature is nationality, and
judgments should not be biased toward any nation.
Besides, due to cognitive bias, a judge may make correlated decisions when judging a sequence of observations.
For example, in diving, athletes dive alternatively, and an athlete's performance may impact the scores given to the next athlete~\cite{kramer2017sequential}.
Judges typically need professional training in order to
combat such implicit bias and make consistent judgments.

\section{Conclusion}

This paper introduces equal opportunity, a concrete notion of ordering fairness in SMR based on the distinction between relevant and irrelevant features. 
Existing protocols -- including ones that attempt to provide some fairness -- can be significantly biased and vulnerable to ordering attacks, even without Byzantine replicas.

We design, implement, and evaluate \sys{}, a new ordered consensus protocol that guarantees two ordering properties for equal opportunity, $\epsilon$-Ordering equality and $\Delta$-Ordering linearizability. 
\sys{} effectively mitigates the well-known ordering attacks while eliminating such attacks in the presence of Byzantine influence is impossible.

{
\begin{flushleft}
\bibliographystyle{plain}
\bibliography{conferences,refs}

\begin{thebibliography}{10}

\bibitem{civilrightact}
Civil rights act.
\newblock \url{https://www.congress.gov/bill/88th-congress/house-bill/7152},
  1964.

\bibitem{originalempolymentact}
Equal employment opportunity act.
\newblock
  \url{https://www.eeoc.gov/history/equal-employment-opportunity-act-1972},
  1972.

\bibitem{empolymentact}
Uniform guidelines on employee selection procedures.
\newblock
  \url{https://www.ecfr.gov/current/title-29/subtitle-B/chapter-XIV/part-1607},
  1978.

\bibitem{organact}
National organ transplant act.
\newblock \url{https://www.congress.gov/bill/98th-congress/senate-bill/2048},
  1984.

\bibitem{organactnew}
National organ transplant program extension act.
\newblock \url{https://www.congress.gov/bill/101st-congress/house-bill/5146},
  1990.

\bibitem{makerdao}
{MakerDAO}.
\newblock \url{https://makerdao.com}, 2017.

\bibitem{uniswap}
{Uniswap}.
\newblock \url{https://uniswap.org/}, 2018.

\bibitem{anotherteeblog}
Blockchains in trusted execution environments ({TEEs}).
\newblock
  \url{https://medium.com/@nadeem.bhati/blockchains-in-trusted-execution-environments-tees-9343b6c3f9e8},
  2019.

\bibitem{diem}
{Diem}.
\newblock \url{https://www.diem.com/en-us/}, 2019.

\bibitem{pompe-hotstuff}
{A Pomp\={e} implementation based on HotStuff}.
\newblock \url{https://github.com/Pompe-org/Pompe-HS}, 2020.

\bibitem{aave}
{Aave liquidity protocol}.
\newblock \url{https://aave.com/}, 2020.

\bibitem{dvrf}
A threshold {VRF} implementation.
\newblock \url{https://github.com/fetchai/research-dvrf}, 2020.

\bibitem{thirdteeblog}
Block building inside {SGX}.
\newblock \url{https://writings.flashbots.net/block-building-inside-sgx}, 2023.

\bibitem{cloudlab}
{CloudLab}.
\newblock \url{https://cloudlab.us/}, 2023.

\bibitem{ethernodes}
Ethereum mainnet statistics.
\newblock \url{https://ethernodes.org/countries?synced=1}, 2023.

\bibitem{wondernetwork}
Global ping statistics: Ping times between {WonderNetwork} servers.
\newblock \url{https://wondernetwork.com/pings}, 2023.

\bibitem{sgxlinux}
Intel software guard extensions for {Linux OS}.
\newblock \url{https://github.com/intel/linux-sgx}, 2023.

\bibitem{sgx}
{Intel software guard extensions ({SGX})}.
\newblock
  \url{https://www.intel.com/content/www/us/en/architecture-and-technology/software-guard-extensions.html},
  2023.

\bibitem{mcl}
mcl: A portable and fast pairing-based cryptography library.
\newblock \url{https://github.com/herumi/mcl}, 2023.

\bibitem{secp256k1}
The secp256k1 library from {Bitcoin}.
\newblock \url{https://github.com/bitcoin-core/secp256k1}, 2023.

\bibitem{ic3teeblog}
The sting framework ({SF}).
\newblock
  \url{https://initc3org.medium.com/the-sting-framework-sf-ef00702c88c7}, 2023.

\bibitem{amm}
What are automated market makers ({AMMs})?
\newblock
  \url{https://chain.link/education-hub/what-is-an-automated-market-maker-amm},
  2023.

\bibitem{solanaalpenglow}
{Alpenglow: A New Consensus for Solana}.
\newblock \url{https://www.anza.xyz/blog/alpenglow-a-new-consensus-for-solana},
  2025.

\bibitem{abraham2011distributed}
Ittai Abraham, Lorenzo Alvisi, and Joseph~Y Halpern.
\newblock Distributed computing meets game theory: {Combining} insights from
  two fields.
\newblock {\em ACM Sigact News}, 42(2):69--76, 2011.

\bibitem{aiyer05bar}
Amitanand~S Aiyer, Lorenzo Alvisi, Allen Clement, Mike Dahlin, Jean-Philippe
  Martin, and Carl Porth.
\newblock {BAR fault tolerance for cooperative services}.
\newblock In {\em Proceedings of the ACM Symposium on Operating Systems
  Principles (SOSP)}, pages 45--58, 2005.

\bibitem{arrow2012social}
Kenneth~J Arrow.
\newblock {\em Social choice and individual values}, volume~12.
\newblock Yale University Press, 1951.

\bibitem{barocas-hardt-narayanan}
Solon Barocas, Moritz Hardt, and Arvind Narayanan.
\newblock {\em Fairness and Machine Learning: Limitations and Opportunities}.
\newblock MIT Press, 2023.

\bibitem{behl17hybrids}
Johannes Behl, Tobias Distler, and Rudiger Kapitza.
\newblock Hybrids on steroids: {SGX}-based high performance {BFT}.
\newblock In {\em Proceedings of the ACM European Conference on Computer
  Systems (EuroSys)}, pages 222--237, 2017.

\bibitem{brandt2016handbook}
Felix Brandt, Vincent Conitzer, Ulle Endriss, J{\'e}r{\^o}me Lang, and Ariel~D
  Procaccia.
\newblock {\em Handbook of computational social choice}.
\newblock Cambridge University Press, 2016.

\bibitem{christian2000Random}
Christian Cachin, Klaus Kursawe, and Victor Shoup.
\newblock Random oracles in constantinople: Practical asynchronous {Byzantine}
  agreement using cryptography.
\newblock In {\em Proceedings of the ACM Symposium on Principles of Distributed
  Computing (PODC)}, page 123–132, 2000.

\bibitem{cachin2022quick}
Christian Cachin, Jovana Mićić, Nathalie Steinhauer, and Luca Zanolini.
\newblock Quick order fairness.
\newblock arXiv:2112.06615, 2022.

\bibitem{castro02practical}
Miguel Castro and Barbara Liskov.
\newblock Practical {Byzantine} fault tolerance and proactive recovery.
\newblock {\em ACM Transactions on Computer Systems (TOCS)}, 20(4):398--461,
  2002.

\bibitem{castro03base}
Miguel Castro, Rodrigo Rodrigues, and Barbara Liskov.
\newblock {BASE: Using} abstraction to improve fault tolerance.
\newblock {\em ACM Transactions on Computer Systems (TOCS)}, 21(3):236--269,
  2003.

\bibitem{chun07attested}
Byung-Gon Chun, Petros Maniatis, Scott Shenker, and John Kubiatowicz.
\newblock Attested append-only memory: Making adversaries stick to their word.
\newblock In {\em Proceedings of the ACM Symposium on Operating Systems
  Principles (SOSP)}, pages 189--204, 2007.

\bibitem{clement09upright}
Allen Clement, Manos Kapritsos, Sangmin Lee, Yang Wang, Lorenzo Alvisi, Mike
  Dahlin, and Taylor Riche.
\newblock {UpRight} cluster services.
\newblock In {\em Proceedings of the ACM Symposium on Operating Systems
  Principles (SOSP)}, pages 277--290, 2009.

\bibitem{clement09making}
Allen Clement, Edmund Wong, Lorenzo Alvisi, Mike Dahlin, and Mirco Marchetti.
\newblock {Making Byzantine fault tolerant systems tolerate Byzantine faults}.
\newblock In {\em Proceedings of the USENIX Symposium on Networked Systems
  Design and Implementation (NSDI)}, pages 153--168, 2009.

\bibitem{daian20flash}
Philip Daian, Steven Goldfeder, Tyler Kell, Yunqi Li, Xueyuan Zhao, Iddo
  Bentov, Lorenz Breidenbach, and Ari Juels.
\newblock Flash boys 2.0: Frontrunning, transaction reordering, and consensus
  instability in decentralized exchanges.
\newblock In {\em Proceedings of the IEEE Symposium on Security and Privacy
  (S\&P)}, pages 910--927, 2020.

\bibitem{dwork88consensus}
Cynthia Dwork, Nancy Lynch, and Larry Stockmeyer.
\newblock Consensus in the presence of partial synchrony.
\newblock {\em Journal of the ACM (JACM)}, 35(2), 1988.

\bibitem{fischer83impossibility}
Michael~J. Fischer, Nancy~A. Lynch, and Michael~S. Paterson.
\newblock Impossibility of distributed consensus with one faulty process.
\newblock In {\em Proceedings of the Symposium on Principles of Database
  Systems}, pages 1--7, 1983.

\bibitem{galindo2021fully}
David Galindo, Jia Liu, Mihair Ordean, and Jin-Mann Wong.
\newblock Fully distributed verifiable random functions and their application
  to decentralised random beacons.
\newblock In {\em Proceedings of the IEEE European Symposium on Security and
  Privacy (EuroS\&P)}, pages 88--102, 2021.

\bibitem{gibbard1973manipulation}
Allan Gibbard.
\newblock Manipulation of voting schemes: a general result.
\newblock {\em Econometrica: Journal of the Econometric Society}, pages
  587--601, 1973.

\bibitem{gilad17scaling}
Yossi Gilad, Rotem Hemo, Silvio~M Micali, Georgios Vlachos, and Nickolai
  Zeldovich.
\newblock Algorand: Scaling {Byzantine} agreements for cryptocurrencies.
\newblock In {\em Proceedings of the ACM Symposium on Operating Systems
  Principles (SOSP)}, pages 51--68, 2017.

\bibitem{gueta18sbft}
Guy~Golan Gueta, Ittai Abraham, Shelly Grossman, Dahlia Malkhi, Benny Pinkas,
  Michael~K. Reiter, Dragos-Adrian Seredinschi, Orr Tamir, and Alin Tomescu.
\newblock {SBFT: A} scalable decentralized trust infrastructure for
  blockchains.
\newblock arxiv:1804/01626v1, 2018.

\bibitem{haeberlen07peerreview}
Andreas Haeberlen, Petr Kouznetsov, and Peter Druschel.
\newblock {PeerReview:} {Practical} accountability for distributed systems.
\newblock In {\em Proceedings of the ACM Symposium on Operating Systems
  Principles (SOSP)}, pages 175--188, 2007.

\bibitem{halpern2017reasoning}
Joseph~Y Halpern.
\newblock {\em Reasoning about Uncertainty}.
\newblock MIT Press, 2017.

\bibitem{hendricks10zzyzx}
James Hendricks, Shafeeq Sinnamohideen, Gregory~R Ganger, and Michael~K Reiter.
\newblock {Zzyzx: Scalable fault tolerance through {Byzantine} locking}.
\newblock In {\em Proceedings of the Internal Conference on Dependable Systems
  and Networks (DSN)}, pages 363--372, 2010.

\bibitem{kapitza12cheapbft}
R\"{u}diger Kapitza, Johannes Behl, Christian Cachin, Tobias Distler, Simon
  Kuhnle, Seyed~Vahid Mohammadi, Wolfgang Schr\"{o}der-Preikschat, and Klaus
  Stengel.
\newblock {CheapBFT}: Resource-efficient {Byzantine} fault tolerance.
\newblock In {\em Proceedings of the ACM European Conference on Computer
  Systems (EuroSys)}, pages 295--308, 2012.

\bibitem{kelkar2022order}
Mahimna Kelkar, Soubhik Deb, and Sreeram Kannan.
\newblock Order-fair consensus in the permissionless setting.
\newblock In {\em Proceedings of the ACM on ASIA Public-Key Cryptography
  Workshop}, pages 3--14, 2022.

\bibitem{kelkar2021themis}
Mahimna Kelkar, Soubhik Deb, Sishan Long, Ari Juels, and Sreeram Kannan.
\newblock Themis: Fast, strong order-fairness in {Byzantine} consensus.
\newblock In {\em Proceedings of the ACM Conference on Computer and
  Communications Security (CCS)}, page 475–489, 2023.

\bibitem{kelkar20order}
Mahimna Kelkar, Fan Zhang, Steven Goldfeder, and Ari Juels.
\newblock Order-fairness for {Byzantine} consensus.
\newblock In {\em Proceedings of the International Cryptology Conference
  (CRYPTO)}, pages 451--480, 2020.

\bibitem{kotla07zyzzyva}
Ramakrishna Kotla, Lorenzo Alvisi, Mike Dahlin, Allen Clement, and Edmund Wong.
\newblock Zyzzyva: Speculative {Byzantine} fault tolerance.
\newblock In {\em Proceedings of the ACM Symposium on Operating Systems
  Principles (SOSP)}, pages 45--58, 2007.

\bibitem{kotla04high}
Ramakrishna Kotla and Mike Dahlin.
\newblock {High throughput Byzantine fault tolerance}.
\newblock In {\em Proceedings of the Internal Conference on Dependable Systems
  and Networks (DSN)}, pages 575--584, 2004.

\bibitem{kramer2017sequential}
Robin~SS Kramer.
\newblock Sequential effects in {Olympic} synchronized diving scores.
\newblock {\em Royal Society Open Science}, 4(1):160812, 2017.

\bibitem{lamport98parttime}
Leslie Lamport.
\newblock The part-time parliament.
\newblock {\em ACM Transactions on Computer Systems (TOCS)}, 16(2):133--169,
  1998.

\bibitem{levin09trinc}
Dave Levin, John~R. Douceur, Jacob~R. Lorch, and Thomas Moscibroda.
\newblock {TrInc}: Small trusted hardware for large distributed systems.
\newblock In {\em Proceedings of the USENIX Symposium on Networked Systems
  Design and Implementation (NSDI)}, pages 1--14, 2009.

\bibitem{lewis14flashboys}
Michael Lewis.
\newblock {\em {Flash boys:} A wall street revolt}.
\newblock {W. W. Norton \& Company}, 2014.

\bibitem{li07beyond}
Jinyuan Li and David Mazi{\'e}res.
\newblock Beyond one-third faulty replicas in {Byzantine} fault tolerant
  systems.
\newblock In {\em Proceedings of the USENIX Symposium on Networked Systems
  Design and Implementation (NSDI)}, pages 131--144, 2007.

\bibitem{liu19scalable}
J.~{Liu}, W.~{Li}, G.~O. {Karame}, and N.~{Asokan}.
\newblock Scalable {Byzantine} consensus via hardware-assisted secret sharing.
\newblock {\em IEEE Transactions on Computers}, 68(1), 2019.

\bibitem{liu16xft}
Shengyun Liu, Paolo Viotti, Christian Cachin, Vivien Qu{\'e}ma, and Marko
  Vukolic.
\newblock {XFT:} {Practical} fault tolerance beyond crashes.
\newblock In {\em Proceedings of the USENIX Symposium on Operating Systems
  Design and Implementation (OSDI)}, pages 485--500, 2016.

\bibitem{liu23flexible}
Shengyun Liu, Wenbo Xu, Chen Shan, Xiaofeng Yan, Tianjing Xu, Bo~Wang, Lei Fan,
  Fuxi Deng, Ying Yan, and Hui Zhang.
\newblock Flexible advancement in asynchronous {BFT} consensus.
\newblock In {\em Proceedings of the ACM Symposium on Operating Systems
  Principles (SOSP)}, page 264–280, 2023.

\bibitem{martin06fast}
Jean-Philippe Martin and Lorenzo Alvisi.
\newblock Fast {Byzantine} consensus.
\newblock {\em IEEE Transactions on Dependable and Secure Computing},
  3(3):202--215, 2006.

\bibitem{moraru13more}
Iulian Moraru, David~G. Andersen, and Michael Kaminsky.
\newblock There is more consensus in egalitarian parliaments.
\newblock In {\em Proceedings of the ACM Symposium on Operating Systems
  Principles (SOSP)}, pages 358--372, 2013.

\bibitem{ke2024separation}
Ke~Mu, Bo~Yin, Alia Asheralieva, and Xuetao Wei.
\newblock Separation is good: A faster order-fairness {Byzantine} consensus.
\newblock In {\em Proceedings of the Network and Distributed System Security
  Symposium (NDSS)}, 2024.

\bibitem{nash1951non}
John Nash.
\newblock Non-cooperative games.
\newblock {\em Annals of Mathematics}, pages 286--295, 1951.

\bibitem{porto15visigoth}
Daniel Porto, Jo\~{a}o Leit\~{a}o, Cheng Li, Allen Clement, Aniket Kate, Flavio
  Junqueira, and Rodrigo Rodrigues.
\newblock Visigoth fault tolerance.
\newblock In {\em Proceedings of the ACM European Conference on Computer
  Systems (EuroSys)}, pages 1--14, 2015.

\bibitem{qin2022quantifying}
Kaihua Qin, Liyi Zhou, and Arthur Gervais.
\newblock Quantifying blockchain extractable value: How dark is the forest?
\newblock In {\em Proceedings of the IEEE Symposium on Security and Privacy
  (S\&P)}, pages 198--214, 2022.

\bibitem{ramseyer2023fair}
Geoffrey Ramseyer and Ashish Goel.
\newblock Fair ordering via streaming social choice theory.
\newblock arXiv:2304.02730, 2023.

\bibitem{russell1950theist}
Bertrand Russell.
\newblock {\em Am I an Atheist or an Agnostic?}
\newblock 1950.

\bibitem{satterthwaite1975strategy}
Mark~Allen Satterthwaite.
\newblock Strategy-proofness and arrow's conditions: Existence and
  correspondence theorems for voting procedures and social welfare functions.
\newblock {\em Journal of Economic Theory}, 10(2):187--217, 1975.

\bibitem{savage1972foundations}
Leonard~J Savage.
\newblock {\em The Foundations of Statistics}.
\newblock Courier Corporation, 1972.

\bibitem{schneider90implementing}
Fred~B. Schneider.
\newblock Implementing fault-tolerant services using the state machine
  approach: A tutorial.
\newblock {\em ACM Computing Surveys}, 22(4):299--319, 1990.

\bibitem{sousa2018byzantine}
Joao Sousa, Alysson Bessani, and Marko Vukolic.
\newblock A {Byzantine} fault-tolerant ordering service for the hyperledger
  fabric blockchain platform.
\newblock In {\em Proceedings of the Internal Conference on Dependable Systems
  and Networks (DSN)}, pages 51--58, 2018.

\bibitem{suri2021basil}
Florian Suri-Payer, Matthew Burke, Zheng Wang, Yunhao Zhang, Lorenzo Alvisi,
  and Natacha Crooks.
\newblock Basil: Breaking up {BFT} with {ACID} (transactions).
\newblock In {\em Proceedings of the ACM Symposium on Operating Systems
  Principles (SOSP)}, pages 1--17, 2021.

\bibitem{torres2021frontrunner}
Christof~Ferreira Torres, Ramiro Camino, and Radu State.
\newblock Frontrunner jones and the raiders of the dark forest: An empirical
  study of frontrunning on the {Ethereum} blockchain.
\newblock In {\em Proceedings of the USENIX Security Symposium}, pages
  1343--1359, 2021.

\bibitem{von1947theory}
John von Neumann and Oskar Morgenstern.
\newblock {\em Theory of Games and Economic Behavior}.
\newblock Princeton University Press, 1947.

\bibitem{yin03separating}
Jian Yin, Jean-Philippe Martin, Arun Venkataramani, Lorenzo Alvisi, and Mike
  Dahlin.
\newblock {Separating agreement from execution for {Byzantine} fault tolerant
  services}.
\newblock In {\em Proceedings of the ACM Symposium on Operating Systems
  Principles (SOSP)}, pages 253--267, 2003.

\bibitem{yin19hotstuff}
Maofan Yin, Dahlia Malkhi, Michael~K Reiter, Guy~Golan Gueta, and Ittai
  Abraham.
\newblock {HotStuff:} {BFT} consensus with linearity and responsiveness.
\newblock In {\em Proceedings of the ACM Symposium on Principles of Distributed
  Computing (PODC)}, pages 347--356, 2019.

\bibitem{young1995equity}
H~Peyton Young.
\newblock {\em Equity: In Theory and Practice}.
\newblock Princeton University Press, 1995.

\bibitem{zhang2024ordered}
Yunhao Zhang.
\newblock {\em Ordered Consensus with Equal Opportunity}.
\newblock PhD thesis, May 2024.

\bibitem{zhang2020byzantine}
Yunhao Zhang, Srinath Setty, Qi~Chen, Lidong Zhou, and Lorenzo Alvisi.
\newblock Byzantine ordered consensus without {Byzantine} oligarchy.
\newblock In {\em Proceedings of the USENIX Symposium on Operating Systems
  Design and Implementation (OSDI)}, pages 633--649, 2020.

\bibitem{zhou2021high}
Liyi Zhou, Kaihua Qin, Christof~Ferreira Torres, Duc~V Le, and Arthur Gervais.
\newblock High-frequency trading on decentralized on-chain exchanges.
\newblock In {\em Proceedings of the IEEE Symposium on Security and Privacy
  (S\&P)}, pages 428--445, 2021.

\end{thebibliography}
\end{flushleft}
}

\newpage

\newcommand\Dnoise{{\Delta_{\mathsf{noise}}}}
\newcommand\Dnet{{\Delta_{\mathsf{net}}}}

\begin{appendices}

\section{Full proofs and remarks}
\label{s:appendix}

\vspace{1ex}
\subsection{Protocol overview}

The protocols we consider here are based on Pomp\=e.
In Pomp\=e, every command $c_i$ is associated with a timestamp $t_i$, and all the commands are ordered based on their timestamps.
Pomp\=e guarantees that every $t_i$ is bounded on both sides by a timestamp from a correct node.

To reduce systemic bias, our new protocol independently samples a random noise $\eta_i$ from the probabilistic distribution $\Phi$ for each command and orders all the commands based on the modified timestamp $t'_i = t_i + \eta_i$, where $t_i$ is the timestamp produced by Pomp\=e.

For clarity, we call the timestamp given by Pomp\=e the \emph{assigned timestamp} (of a command) and the timestamp with added noise \emph{modified timestamp} (of a command).

For simplicity, this write-up analyzes the case where $\Phi$ is a uniform distribution on $[0, \Dnoise]$ with $\Dnoise$ being a protocol parameter we can choose.

\subsection{Proofs}

We first assume a time bound $\Dnet$, which includes the network latency and additional slack for clock drifts across nodes.

\begin{assumption}
    The assigned timestamp given to any command $c$ sent by a client at time $T$ is in the interval $[T, T + \Dnet]$. 
\end{assumption}

Pomp\=e can guarantee this assumption in a synchronous network with appropriate parameter $\Dnet$.
Pomp\=e does not guarantee which value in this interval will be chosen as the assigned timestamp, so we assume the adversary can pick any value from the interval.
We also assume the adversary can know the result modified timestamp once it has picked the assigned timestamp and use that information to make further choices.

\subsubsection{$\Delta$-Ordering linearizability}

\begin{definition}[$\Delta$-Ordering Linearizability]
    For any two commands, $c_1$ and $c_2$, sent at least $\Delta$ apart, i.e. $|T_1 - T_2| > \Delta$, then it is guaranteed that the earlier command will be ordered before the later command. 
\end{definition}

\begin{theorem}
    Our protocol guarantees $(\Dnet + \Dnoise)$-Ordering Linearizability.
\end{theorem}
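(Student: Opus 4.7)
The plan is to reduce the claim to a simple interval-containment argument, using the assumption on assigned timestamps as the only substantive input. First, I would invoke the assumption to bound the assigned timestamps: for a command $c_i$ issued by a client at time $T_i$, we have $t_i \in [T_i, T_i + \Dnet]$. Since the noise $\eta_i$ is drawn from the uniform distribution on $[0, \Dnoise]$, it lies in $[0, \Dnoise]$ with probability one, so the modified timestamp $t'_i = t_i + \eta_i$ lies in $[T_i, T_i + \Dnet + \Dnoise]$.

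Second, without loss of generality suppose $T_1 < T_2$ with $T_2 - T_1 > \Dnet + \Dnoise$. I would then compare the worst-case endpoints: the largest possible value of $t'_1$ is $T_1 + \Dnet + \Dnoise$, and the smallest possible value of $t'_2$ is $T_2$. The hypothesis gives $T_2 > T_1 + \Dnet + \Dnoise$, so $t'_2 > t'_1$ holds deterministically. Since the protocol orders commands by modified timestamps, $c_1$ precedes $c_2$ in the output.

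The only subtlety, and the point that requires care rather than being a real obstacle, is the adversarial model: the adversary can pick each assigned timestamp within its allowed interval, and may adapt subsequent choices after observing the realized noise of earlier commands. This does not affect the argument because the bound is an unconditional interval containment — for every realization of noises and every adversarial strategy, $t'_1$ and $t'_2$ lie in their respective intervals, which are disjoint by hypothesis. Hence no scheduling or adversarial choice can invert the order, and the conclusion follows from the containment alone rather than from any probabilistic reasoning about $\Phi$.
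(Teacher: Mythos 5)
Your proof is correct and follows essentially the same route as the paper's: bound the modified timestamp of each command in $[T_i, T_i + \Dnet + \Dnoise]$ via the assumption on assigned timestamps plus the support of the noise, and observe that the hypothesis makes the two intervals disjoint so the order cannot be inverted. The extra remark about the adaptive adversary is a harmless elaboration of the same unconditional containment argument.
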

\begin{proof}
    
    By our assumption, the minimum assigned timestamp for a command sent at time $T$ is $T$, and the maximum assigned timestamp is $T + \Dnet$.
    
    By the definition of our protocol, the minimum modified timestamp is $T$, and the maximum modified timestamp is $T + \Dnet + \Dnoise$.

    Thus, it is impossible for two commands sent more than $\Dnet + \Dnoise$ apart to be ordered in the reverse order.
\end{proof}

\subsubsection{$\epsilon$-Ordering equality}

\begin{definition}[$\epsilon$-Ordering Equality]
    There exists a function $\epsilon(n)$ such that, for any set of $n$ commands $C = \{c_1, c_2, ..., c_n\}$ all sent at the same time $T$ and any two total order, $\prec_1$ and $\prec_2$, of commands in $C$, the difference in probabilities of the protocol outputting $\prec_1$ and $\prec_2$ is bounded by $|Pr[\prec_1] - Pr[\prec_2]| \leq \frac{\epsilon(n)}{n!}$.
\end{definition}

Because the adversary can assign an assigned timestamp to any command sent at time $T$ up to $T + \Dnet$, we must have $\Dnoise > \Dnet$ to prevent the adversary from dictating the ordering between any two commands.

We first look at the case where $n = 2$.

We use the symbol $c_1 \prec c_2$ to denote the event that command $c_1$ is ordered before command $c_2$ in the output order.

\begin{theorem}
    For any two commands, $c_1$ and $c_2$ sent at the same time $T$, we have $|Pr[c_1 \prec c_2] - Pr[c_2 \prec c_1]| \le 1 - (1 - \frac{\Dnet}{\Dnoise})^2$.
\end{theorem}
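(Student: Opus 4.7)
The plan is to reduce the claim to a single extremal configuration of adversary-chosen assigned timestamps and then compute one integral. Write $t'_i = t_i + \eta_i$ where the adversary picks $t_1, t_2 \in [T, T+\Dnet]$ and the protocol draws $\eta_1, \eta_2$ i.i.d.\ uniform on $[0, \Dnoise]$. Since the $\eta_i$ are continuous, $\Pr[t'_1 = t'_2] = 0$, so $\Pr[c_1 \prec c_2] + \Pr[c_2 \prec c_1] = 1$ and hence
\[
|\Pr[c_1 \prec c_2] - \Pr[c_2 \prec c_1]| = |1 - 2\Pr[c_1 \prec c_2]|.
\]
Thus maximizing the quantity of interest is equivalent to driving $\Pr[c_1 \prec c_2]$ as far from $1/2$ as possible, either toward $0$ or toward $1$.

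Next, I would argue that the adversary's optimal strategy is extremal. Observe that $\Pr[c_1 \prec c_2] = \Pr[\eta_1 - \eta_2 < t_2 - t_1]$ is monotone in the gap $d := t_2 - t_1 \in [-\Dnet, \Dnet]$. Therefore the minimum over adversarial choices is attained at $d = -\Dnet$ (that is, $t_1 = T+\Dnet$, $t_2 = T$) and, by the symmetric argument, the maximum is attained at $d = +\Dnet$. By the symmetry between the two cases, it suffices to analyze the minimizing configuration.

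In that configuration, let $u = t'_1 - (T+\Dnet)$ and $v = t'_2 - T$, so $(u,v)$ is uniform on $[0,\Dnoise]^2$ and $c_1 \prec c_2$ iff $v > u + \Dnet$. The favorable region is a right triangle with legs of length $\Dnoise - \Dnet$, so
\[
\Pr[c_1 \prec c_2] = \frac{1}{\Dnoise^2}\cdot \frac{(\Dnoise - \Dnet)^2}{2} = \frac{1}{2}\left(1 - \frac{\Dnet}{\Dnoise}\right)^{\!2}.
\]
Plugging back into the symmetry identity gives $|1 - 2\Pr[c_1 \prec c_2]| = 1 - (1 - \Dnet/\Dnoise)^2$, which is the desired bound.

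The only subtle step is justifying the monotonicity claim and thereby reducing to the extremal $(t_1, t_2)$; once that is in hand the rest is a one-line area computation on a square. I would make the monotonicity rigorous by writing out the CDF of $\eta_1 - \eta_2$ (a symmetric triangular distribution on $[-\Dnoise, \Dnoise]$) and noting it is strictly increasing wherever its density is positive, so that the sup and inf over $d \in [-\Dnet, \Dnet]$ are attained at the endpoints. I expect this argument, rather than the area calculation, to be the main (though still minor) obstacle; the rest is a routine exercise that mirrors the sketch already given in the excerpt.
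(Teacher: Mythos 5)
Your proof is correct and follows essentially the same route as the paper's: reduce to the identity $|1-2\Pr[c_1\prec c_2]|$, observe that the adversary's optimum is the extremal assignment $t_1=T+\Dnet$, $t_2=T$ (or its mirror image), and evaluate the resulting double integral over the square to obtain $\tfrac{1}{2}(1-\Dnet/\Dnoise)^2$. The only difference is that you explicitly justify the extremality via monotonicity of the CDF of $\eta_1-\eta_2$, a step the paper asserts without further argument.
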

\begin{proof}
    Let $t_i$ be the assigned timestamp and $t'_i$ be the modified timestamp for command $c_i$. 
    By the definition of our protocol, $t'_i$ is uniformly sampled from the interval $[t_i, t_i + \Dnoise]$.
    Thus, we have:
    $$Pr[c_1 \prec c_2] = \frac{1}{\Dnoise^2}\int_{t_1}^{t_1 + \Dnoise} \int_{t_2}^{t_2 + \Dnoise} (t'_1 < t'_2) dt'_2 dt'_1$$
        
    By the assumption, $t_1, t_2 \in [T, T + \Dnet]$.
    The optimal strategy for the adversary who controls assigned timestamps is assigning $T + \Dnet$ to $c_1$ and $T$ to $c_2$ to minimize the probability of $c_1 \prec c_2$. Thus, we have:
    $$Pr[c_1 \prec c_2] \ge \frac{1}{\Dnoise^2} \int_{T + \Dnet}^{T + \Dnet + \Dnoise} \int_{T}^{T + \Dnoise} (t'_1 < t'_2) dt'_2 dt'_1$$

    Eliminate $T$ and, because of $\Dnet < \Dnoise$, we have:
    $$Pr[c_1 \prec c_2] \ge \frac{1}{\Dnoise^2} \int_{\Dnet}^{\Dnoise} \int_{x}^{\Dnoise} 1 dydx = \frac{1}{2}(1 - \frac{\Dnet}{\Dnoise})^2$$

    Because this is a tight lower bound following this strategy, the difference in probabilities of those two output orders is given by:
    $$|Pr[c_1 \prec c_2] - Pr[c_2 \prec c_1]| = |1 - 2 Pr[c_1 \prec c_2]| \le 1 - (1 - \frac{\Dnet}{\Dnoise})^2$$
\end{proof}

We also prove the general case.

\begin{theorem}
    Our protocol satisfies $\epsilon(n) = (1 + \alpha)^n - (1 - \alpha)^n - n\alpha^n$-ordering equality.
\end{theorem}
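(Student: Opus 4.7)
The plan is to generalize the two-invocation proof by reducing $|Pr[\succ_{1}] - Pr[\succ_{2}]|$ to the spread $\max_\succ P_\succ(s) - \min_\succ P_\succ(s)$ over the adversary's offset choices $s_i = t_i - T$, and then bound the max and min separately. Normalizing so $\Dnoise = 1$ (so $s_i \in [0, \alpha]$), I would view the probability of a specific order $\succ$ as a box--simplex intersection volume, $P_\succ(s) = \mathrm{Vol}\{v \in \prod_i [s_i, s_i+1] : v \text{ is sorted by } \succ\}$, and note that this unit-volume box sits inside the ``big cube'' $[0, 1+\alpha]^n$ whose $n!$ order-simplices each have volume $(1+\alpha)^n/n!$, with $\sum_\succ P_\succ(s) = 1$. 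The target is to prove $\max_\succ P_\succ(s) \leq ((1+\alpha)^n - n\alpha^n)/n!$ and $\min_\succ P_\succ(s) \geq (1-\alpha)^n/n!$, whose difference is exactly $\epsilon(n)/n!$.

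The lower bound is the easy direction and tightly generalizes the minimum-probability computation from the $n=2$ proof. I would introduce the common ``core'' sub-interval $I = [\max_i s_i,\ \min_i s_i + 1]$, which has length at least $1-\alpha$ since $\max_i s_i - \min_i s_i \leq \alpha$. With probability at least $(1-\alpha)^n$ all modified timestamps $v_i$ fall independently into $I$, and conditional on this event they are i.i.d.\ uniform on $I$, so each of the $n!$ orders has conditional probability exactly $1/n!$. This immediately yields $P_\succ(s) \geq (1-\alpha)^n/n!$.

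For the upper bound I would show that, regardless of $s$, at least $n\alpha^n/n!$ of the big-cube simplex must lie outside $\prod_i [s_i, s_i + 1]$. My strategy is to analyze the $n$ excluded slabs $E_i = \{v : v_i \notin [s_i, s_i+1]\}$: the marginal density of the $i$-th order statistic of $n$ uniforms on $[0, 1+\alpha]$ is the scaled Beta density $g_i(v) = v^{i-1}(1+\alpha-v)^{n-i}/[(i-1)!(n-i)!]$, and the combinatorial identity $\sum_i g_i(v) = (1+\alpha)^{n-1}/(n-1)!$ aggregates contributions across coordinates. Inclusion--exclusion on $\{E_i\}$ intersected with the simplex, together with the geometric fact that the ordering constraint rules out certain pairwise intersections (e.g.\ ``$v_i$ above its box'' and ``$v_j$ below its box'' for $i < j$ is incompatible with $v_i < v_j$ whenever $s_i + 1 > s_j$), should isolate the target $n\alpha^n/n!$ of excluded mass.

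The hard part will be executing this upper bound cleanly. Unlike the $n=2$ case where it is tight, for $n \geq 3$ the bound is loose---at $n=3$, $\alpha=1$ the true maximum of $P_\succ$ is $3/4$ while the stated bound is $5/6$---so the inclusion--exclusion only needs to be carried out precisely enough to leave $n\alpha^n/n!$ after overlap corrections, not to be sharp. An alternative I would consider is a direct extremal computation: identify the $s^{*}$ maximizing $P_{\mathrm{id}}(s)$ (conjecturally the evenly spaced $s^{*}_i = (i-1)\alpha/(n-1)$) and evaluate the resulting iterated integral inductively on $n$, then verify the outcome stays within $((1+\alpha)^n - n\alpha^n)/n!$.
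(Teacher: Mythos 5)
Your lower bound is exactly the paper's: condition on all $n$ modified timestamps landing in the common core interval of length at least $\Delta_{noise}-\Delta_{net}$, where they are i.i.d.\ uniform and hence exchangeable, giving $Pr[\prec]\ge(1-\alpha)^n/n!$; combined with an upper bound on $\max_{\prec}Pr[\prec]$ this yields $\epsilon(n)$, which is also how the paper concludes. The gap is entirely in the upper bound, in two respects.

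First, neither of your two routes to $\max_{\prec}P_{\prec}(s)\le((1+\alpha)^n-n\alpha^n)/n!$ is actually carried out: the inclusion--exclusion over the excluded slabs $E_i$ is left at the level of ``should isolate the target after overlap corrections,'' and those overlap corrections are precisely the hard content; the extremal route rests on an unverified conjecture about the maximizing $s^{*}$. Second, and more fundamentally, you bound $P_{\prec}(s)$ for a \emph{fixed} offset vector $s$, i.e., a non-adaptive adversary. The paper's model explicitly lets the adversary observe the realized modified timestamp of one command before choosing the assigned timestamps of the others. Your own observation that the bound is ``loose'' at $n=3$, $\alpha=1$ (true maximum $3/4$ versus the stated $5/6$) is a symptom of this mismatch: $3/4$ is the non-adaptive optimum, while the paper claims $5/6$ is attained by an adaptive strategy. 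The paper's proof is an induction on $n$ tailored to adaptivity: give the $\prec$-first command assigned timestamp exactly $T$, condition on its realized modified timestamp $t'_1$, and split on whether $t'_1\le T+\Delta_{net}$ (recurse on the remaining $n-1$ commands with the shrunken ratio $\alpha'=\alpha-(t'_1-T)/\Delta_{noise}$) or $t'_1>T+\Delta_{net}$ (push all remaining assigned timestamps to $T+\Delta_{net}$ and use exchangeability on a common interval); integrating the two cases gives $\frac{1}{n!}\bigl((1+\alpha)^n-1-(n-1)\alpha^n\bigr)+\frac{1}{n!}(1-\alpha^n)$. A completed version of your argument would therefore establish the inequality only against the weaker, non-adaptive adversary; to match the paper you would need either its inductive conditioning argument or some other mechanism for handling timestamp choices made after noise is partially revealed.
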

\begin{proof}
    It suffices to prove a lower bound and an upper bound on any $Pr[\prec]$.
    For convenience, let $\alpha = \frac{\Dnet}{\Dnoise}$.
    We also use the following lemma:
    \begin{lemma}
        For any $n$ independent random variables uniformly sampled from the same interval $[a, b]$, all orderings of those $n$ variables have the same probability of $\frac{1}{n!}$.
    \end{lemma}
    \begin{proof}
        By symmetry.
    \end{proof}

    \begin{lemma}[Tight lower bound]
        $Pr[\prec] \ge \frac{1}{n!}(1 - \alpha)^n$.
    \end{lemma}
    \begin{proof}
        Consider the time interval $[T + \Dnet, T + \Dnoise]$.
        Any command sent at time $T$ has a probability of $1 - \alpha$ being assigned a modified timestamp uniformly sampled from the interval because the assigned timestamp, $t$, is picked from in $[T, T + \Dnet]$ and the additional noise, $\eta$, is sampled uniformly from $[0, \Dnoise]$.
        Thus, the probability of all $n$ modified timestamps being contained in this interval is $(1 - \alpha)^n$.
        Using the lemma above, all $n!$ possible permutations of these $n$ commands will have equal probabilities, which gives the lower bound of $\frac{1}{n!}(1 - \alpha)^n$.
    
        This lower bound is tight by assigning the first command in $\prec$ an assigned timestamp of $T + \Dnet$ and the last command in $\prec$ an assigned timestamp of $T$.
        In this case, all commands' modified timestamps must be included in this interval for $\prec$ to be the output order.
    \end{proof}
    
    \begin{lemma}[Tight upper bound]
        $Pr[\prec] \le \frac{1}{n!}((1 + \alpha)^n - n\alpha^n)$
    \end{lemma}
    \begin{proof}
        We prove this tight upper bound by induction on $n$.
        
        The base case, $n = 1$, holds trivially.

        For the inductive case, consider how the adversary can pick assigned timestamps to maximize the probability of $\prec$.

        For the first command in $\prec$, $c_1$, because the adversary would like its modified timestamp to be as small as possible, the optimal strategy is assigning an assigned timestamp of exactly $T$ and thus a modified timestamp uniformly sampled from $[T, T + \Dnoise]$.

        There are two possible outcomes for $t'_1$: $t'_1 \le T + \Dnet$ and $t'_1 > T + \Dnet$.

        In the first case, the adversary can assign an assigned timestamp to every other command of at least $t'_1$ to ensure $c_1$ is the first command in the permutation.
        But the adversary must still make the other commands follow $\prec$.
        The optimal strategy is to achieve the tight upper bound of $n-1$ commands with a different $\alpha' = \alpha - \frac{t'_1 - T}{\Dnoise}$, which gives the following integral over the probability density function:

        \begin{align*}
              & \int_{0}^{\alpha}\frac{1}{(n - 1)!}((1 + (\alpha - x))^{n-1} - (n - 1)(\alpha - x)^{n - 1}) dx \\
            = & \int_{0}^{\alpha}\frac{1}{(n - 1)!}((1 + y)^{n-1} - (n - 1)y^{n - 1}) dy \\
            = & \frac{1}{n!}((1 + \alpha)^n - 1 - (n - 1)\alpha^n)
        \end{align*}

        In the second case, where $t'_1 > T + \Dnet$, the optimal strategy is to delay the latter commands as much as possible.
        Thus, every other command gets an assigned timestamp of exactly $T + \Dnet$, and their modified timestamps are all sampled uniformly from $[T + \Dnet, T + \Dnet + \Dnoise]$.
        To have $\prec$ being the output order, all other commands must be in the interval $[t'_1, T + \Dnet + \Dnoise]$, which has a probability of $(1 + \alpha - \frac{t'_1}{\Dnoise})^{n-1}$.
        Furthermore, those commands need to satisfy $\prec$.
        Because all those timestamps are sampled uniformly from the same interval, by the lemma above, we have the following integral:

        \begin{align*}
             & \int_{\alpha}^{1} \frac{1}{(n - 1)!}(1 + \alpha - x)^{n-1}dx \\
            = & \int_{\alpha}^{1} \frac{1}{(n - 1)!}y^{n-1}dy \\
            = & \frac{1}{n!}(1 - \alpha^n)
        \end{align*}

        Adding the probabilities of the two cases together, we have:

        $$ Pr[\prec] \le \frac{1}{n!}((1 + \alpha)^n - 1 - (n - 1)\alpha^n) + \frac{1}{n!}(1 - \alpha^n) = \frac{1}{n!}((1 + \alpha)^n - n\alpha^n)$$

        This bound is tight following our construction.
        
    \end{proof}

    Thus, given the tight lower and upper bounds, the difference $\frac{\epsilon(n)}{n!}$ is bounded by $\frac{1}{n!}((1 + \alpha)^n - (1 - \alpha)^n - n\alpha^n)$, $\epsilon(n) \sim 2n\alpha$.
\end{proof}

\subsection{Remarks}

\begin{enumerate}
    \item For any $n$, the ordering equality bound converges to $0$ when $\Dnoise \rightarrow \infty$. This matches our intuition that infinitely large random noise could achieve equality trivially.

    \item The above results show the trade-off between ordering linearizability and ordering equality: A larger $\Dnoise$ gives a better ordering-equality bound ($\sim \frac{2}{(n-1)!}\cdot\frac{\Dnet}{\Dnoise}$) but a worse ordering-linearizability bound  ($\Dnet + \Dnoise$) and, thus, worse latency.

    \item Given a fixed $\Dnoise$, the bound on the difference between probabilities of different output orders, $\epsilon(n)$, weakens as the number of commands $n$ grows.
    This is because the adversary has more control over the output order for more commands, as it can pick the assigned timestamps for all commands.
\end{enumerate}

\end{appendices}
\end{document}